\documentclass[11pt]{article}

\usepackage[latin1]{inputenc}
\usepackage[T1]{fontenc}
\usepackage[english]{babel}
\usepackage{amsfonts}
\usepackage{amssymb}
\usepackage{amsmath}
\usepackage{amsthm}
\usepackage{authblk}

\def\bc{\begin{center}}
\def\ec{\end{center}}
\def\be{\begin{equation}}
\def\ee{\end{equation}}
\def\bea{\begin{eqnarray}}
\def\eea{\end{eqnarray}}
\def\easp{\mathbb{E}}
\newtheorem{proposition}{Proposition}
\newtheorem{lemma}{Lemma}
\newtheorem{theorem}{Theorem}

\newtheorem{remark}{Remark}
\newtheorem{definition}{Definition}

\setlength{\textheight}{20cm} \setlength{\textwidth}{16cm}
\setlength{\oddsidemargin}{1cm} \setlength{\evensidemargin}{1cm}

\begin{document}

\title{Notes on the p-spin glass studied via Hamilton-Jacobi and  Smooth-Cavity  techniques}

\author[1]{Elena Agliari}
\author[2]{Adriano Barra}
\author[1]{Raffaella Burioni}
\author[1]{Aldo Di Biasio}
\affil[1]{Dipartimento di Fisica, Universit\`a di Parma \newline and INFN, Gruppo collegato di Parma}
\affil[2]{Dipartimento di Fisica, Sapienza Universit\`a di Roma}

\maketitle

\begin{abstract}

In these notes, we continue our investigation of classical toy models of disordered statistical mechanics, through techniques recently developed and tested mainly on the paradigmatic Sherrington-Kirkpatrick spin glass.  Here, we consider the p-spin-glass model with Ising spins and interactions drawn from a normal distribution $\mathcal{N}[0,1]$. After a general presentation of its properties (e.g. self-averaging of the free energy, existence of a suitable thermodynamic limit), we study its equilibrium behavior within the Hamilton-Jacobi framework and the smooth cavity approach. Through the former we find both the RS and the 1-RSB expressions for the free-energy, coupled with their self-consistent relations for the overlaps. Through the latter, we recover these results as irreducible expression, and we study the generalization of the overlap polynomial identities suitable for this model;  a discussion on their deep connection with the structure of the internal energy and the entropy closes the investigation.

\end{abstract}


\section{Introduction}

In these notes we continue our investigation on the mathematical methods and the physics underlying many body interactions, namely we adapt  recent mathematical techniques to the study of equilibrium statistical mechanics of p-spin glasses. In the past we analyzed p-spin systems with the simpler ferromagnetic couplings \cite{barrapspinrem} and p-spin systems with diluted coupling \cite{boh}, while now we turn to p-spin systems with frustrated couplings, which are termed p-spin glasses
\cite{mezardgross, gardner}.

We first introduce the model, with all the necessary definitions stemmed from statistical mechanics, and then we adapt the Hamilton-Jacobi technique (developed for the Sherrington-Kirkpatrick model by Guerra \cite{guerrahj} and later enlarged to a broad validity \cite{barrainterpol, tesispec, barragenmech,starr}) to these systems, so to be able to solve the model (in some physical approximation, that is, replica symmetric and one-step of broken replica symmetry, as discussed later), without any relation with the  original statistical mechanics framework.

This has two advantages: the development of a clear and powerful mathematical alternative to solve the thermodynamics of these many body systems, and a further rigorous confirmation of results raised in the theoretical physics scenario.

Then, we adapt the method of the smooth cavity to the same problem to obtain another series of results: in particular, after recovering a clear picture of the thermodynamics in perfect agreement with the previous part of the work and with existing results, we focus on the polynomial identities often called Aizenman-Contucci \cite{aizcont} and Ghirlanda-Guerra \cite{gg} relations. We will show how to prove their validity  even for the p-spin glasses considered here and we will try to revise their deep physical meaning ultimately offering a unifying framework where cavity fields \cite{mpv} and stochastic stability \cite{contuccistocstab} merge to work synergically \cite{barradesanctis}. Furthermore, comparison among the results obtained with both the methods will provide the reader with a deeper understanding of the techniques themselves as well as of the physical properties of these models.

In order to be comprehensible for both the communities of theoretical physicists and of mathematical physicists, the two methods are exposed with a slightly different approach. In the former (closer to the first community), results are presented in form of a theorem following the related proof, which is never explicitly expressed as a "proof", while in the latter (closer to the second community) results are first declared and then proved.
\newline
Finally, in the last section we discuss results and possible outlooks.

\section{The model and the related statistical mechanics package}

The p-spin glass is the model for a system of spins $\sigma$, i. e. dichotomic variables which can take the values $\pm 1$,
interacting together in $p$-tuples with random couplings $J_{i_1.... i_p}$, and, possibly, with an external field $h$. The Hamiltonian is the function which defines the model
and physically speaking represents the extensive energy associated with a given configuration of the spins, for a certain value of the couplings and of the external field.
\begin{definition}
Given a system of $N$ spins $\sigma_i$, $i=1,...,N$,
the Hamiltonian associated with a configuration $\sigma= \{ \sigma_1,...,\sigma_N \}$ of the spins,
interacting in $p$-tuples and with an external uniform magnetic field $h$, is defined as follows:
\be
\label{eq:hamiltoniana}
H_N(\sigma, J, h) = -\sqrt{\frac{p!}{2 N^{p-1}}} \sum_{i_1<...<i_p}^{1,N} J_{i_1...i_p} \sigma_{i_1}...\sigma_{i_p} - h\sum_{i=1}^N \sigma_i.
\ee
\end{definition}
The first summation is taken over all the possible choices of indices $1\leq i_1 <...< i_p \leq N$
and the couplings $J$ are independent standard Gaussian random variables.
This can be considered as a generalization of the well known Sherrington-Kirkpatrick model (SK) and its interest lays in the fact that its low temperature behavior is much simpler than in the SK model.
The normalization factor preceding the first sum ensures that the Hamiltonian is an extensive quantity
(i.e. proportional to the number of spins $N$) and
the $2$ at the denominator allows recovering the SK definition when $p=2$.
\newline
For the sake of simplicity we only consider the case of an even number $p$ of interacting spins.
In this case the system has a gauge symmetry when the external field $h$ is set equal to zero:
it is left invariant under the transformation $\sigma_{i_k} \to \sigma_{i_k} \sigma_{i_{p+1}}$ for all $k=1, 2, ..., p$.
Moreover, we assume that the external field vanishes, thus we neglect the second term:
in fact, this is a one-body term, which is simple to deal with.
In the following, $H_N(\sigma, J)$ has to be interpreted as $H_N(\sigma, J, 0)$.

For this model, the investigation of the free energy and its decomposition via Hamilton-Jacobi technique or in
terms of a cavity function and the energy can still be performed, but the simple mathematical treatment of the SK,
ultimately due to the second order nature of its phase transition allowing
expansions in small overlaps, is lost whenever $p > 2$ because the transition becomes first order.

This is an interesting remark because, when using the replica trick, the p-spin
models are always thought of as simpler cases. This has a deep physical counter-part:
the covariance of the Hamiltonian is given by the overlap to the power
$p$, so, for example the SK Hamiltonian has covariance $\sim N q^2$, while a generic
p-spin model has a covariance $\sim N q^p$.
Of course, as the overlap is bounded by
one, this means that by increasing the order of interactions $p$, these correlations
become more and more negligible until, in the limit $p \to \infty$,  one recovers an uncorrelated model,
i. e. the Random Energy Model \cite{derridarem}. The latter is analytically solvable without either
replica tricks or cavity field techniques.

Through a direct calculation (by applying Wick theorem), we can check that the normalization of the Hamiltonian ensures
a correct volume scaling for the energy such that $$\lim_{N \to \infty}\langle - H_N (\sigma,J)/N \rangle \leq c \in \mathbb{R}.$$
\newline
All physical information is encoded in the free energy density $f(\beta) = \lim_{N \to \infty}f_N(\beta)$.
\begin{definition}
The free energy density $f_N(\beta)$ at finite volume $N$, which is a function of the inverse temperature $\beta=1/T$, is defined as
\be
\label{eq:energialibera}
f_N (\beta) \equiv - \frac{1}{\beta N} \mathbb{E} \log Z_N(\beta, J)
\equiv - \frac{1}{\beta N} \mathbb{E} \log \sum_{\sigma} e^{-\beta H_N(\sigma, J)},
\ee
\end{definition}
where $Z_N$ is called the partition function
and $\mathbb{E}$ stands for the expected value with respect to all the $J$'s.
As usual, the sum is over the $2^N$ configurations $\sigma = \{ \sigma_1, \sigma_2,...,\sigma_N \}$ of the spins.
Sometimes it is more convenient to deal with the "pressure" $$\alpha(\beta) = \lim_{N \to \infty} \alpha_N(\beta) = \lim_{N \to \infty} -\beta f_N(\beta).$$
These are the so-called \emph{quenched} free energy/pressure, where the disorder is "frozen"
and which are more difficult to compute than the \emph{annealed} ones,
where the expectation is taken directly in the partition function.
Using the property $\mathbb{E} \exp \lambda z = \exp \lambda^2/2 $ valid for a standard random variable $z$, the computation of
the \emph{annealed} free energy density $ f_A(\beta)$ is in fact straightforward:
\begin{lemma}
The annealed free energy density is given by
\be\label{annoso}
-\beta f_{A}(\beta) \equiv \lim_{N \to \infty}  \frac{1}{ N}  \log \mathbb{E} Z_N(\beta, J) =  \log 2 + \beta^2 / 4.
\ee
\end{lemma}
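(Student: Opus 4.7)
The plan is to exchange the sum over spin configurations with the Gaussian expectation, exploit the independence of the couplings so that the expectation factorizes over $p$-tuples, and then recognize that the resulting integrand is a Gaussian moment generating function evaluated at a deterministic argument (because $\sigma_i^2=1$). The limit $N\to\infty$ then reduces to a simple combinatorial count.

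First I would write, by linearity of $\mathbb{E}$,
\begin{equation*}
\mathbb{E}\,Z_N(\beta,J)=\sum_{\sigma}\mathbb{E}\exp\!\left(\beta\sqrt{\tfrac{p!}{2N^{p-1}}}\sum_{i_1<\cdots<i_p}J_{i_1\cdots i_p}\sigma_{i_1}\cdots\sigma_{i_p}\right).
\end{equation*}
For each fixed configuration $\sigma$, the exponent is a linear combination of independent standard Gaussians, hence Gaussian with mean $0$ and variance
\begin{equation*}
\beta^2\,\frac{p!}{2N^{p-1}}\sum_{i_1<\cdots<i_p}(\sigma_{i_1}\cdots\sigma_{i_p})^2=\beta^2\,\frac{p!}{2N^{p-1}}\binom{N}{p},
\end{equation*}
where I use $\sigma_i^2=1$. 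Applying the identity $\mathbb{E}e^{\lambda z}=e^{\lambda^2/2}$ quoted in the excerpt (in the form $\mathbb{E}e^{X}=e^{\mathrm{Var}(X)/2}$ for centered Gaussian $X$), the integrand becomes independent of $\sigma$.

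Carrying this out, I would obtain
\begin{equation*}
\mathbb{E}\,Z_N(\beta,J)=2^{N}\exp\!\left(\frac{\beta^2}{4}\cdot\frac{1}{N^{p-1}}\frac{N!}{(N-p)!}\right),
\end{equation*}
so that
\begin{equation*}
\frac{1}{N}\log\mathbb{E}\,Z_N(\beta,J)=\log 2+\frac{\beta^2}{4}\cdot\frac{N(N-1)\cdots(N-p+1)}{N^{p}}.
\end{equation*}
The last factor tends to $1$ as $N\to\infty$, yielding the claimed limit $\log 2+\beta^2/4$.

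There is essentially no obstacle: the only point requiring a small bit of care is keeping track of the normalization constants and verifying that $\binom{N}{p}\cdot p!/N^{p-1}=N!/((N-p)!N^{p-1})$ grows like $N$ (not faster), which is precisely the reason for the $\sqrt{p!/(2N^{p-1})}$ normalization in the Hamiltonian. This also explains why the $p$-dependence drops out entirely at the annealed level, reducing the expression to the $p=2$ (Sherrington–Kirkpatrick) answer.
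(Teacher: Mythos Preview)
Your proof is correct and follows exactly the route the paper indicates: it merely states that the computation is ``straightforward'' once one uses the Gaussian identity $\mathbb{E}\exp(\lambda z)=\exp(\lambda^2/2)$, and your argument fills in precisely those details. The combinatorial bookkeeping with $\binom{N}{p}p!/N^{p-1}\to N$ is handled cleanly, so there is nothing to add.
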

We notice that when $\beta$ is sufficiently small, namely at high temperature, this result coincides with that obtained 
for the quenched average.
Physically speaking, when the temperature is high enough, spins are no longer correlated and averaging the disorder
directly in the partition function (which in some way means that it participates to thermodynamic equilibrium) and then taking the logarithm
is the same as averaging $\log Z_N$.

\begin{definition}
If $F(\sigma)$ is a (real-valued) physical observable, we denote the Boltzmann average with
\be
\omega(F(\sigma)) = (1/Z_N(\beta,J)) \sum_{\sigma} F(\sigma) \exp (-\beta H_N(\sigma, J)).
\ee
\end{definition}
This can be generalized by considering two or more independent replicas of the system with the same disorder,
so that if $F(\sigma, \sigma')$ is an observable depending on the configuration of two replicas $\sigma, \sigma'$,
 its Boltzmann average is  $\Omega(F(\sigma, \sigma')) \equiv (1/Z^2_N(\beta, J)) \sum_{\sigma} \sum_{\sigma'} F(\sigma, \sigma') \exp (-\beta H(\sigma)-\beta H(\sigma'))$.
Notice that, even if we did not write it explicitly, $\omega$ depends on the disorder $J$, too.
We denote the average over the disorder with brackets: $\langle F(\sigma, \sigma') \rangle \equiv \easp \,\Omega(F(\sigma, \sigma')) $.


\section{Thermodynamic limit}

 The quantity one is typically interested in is actually the thermodynamical limit of the quenched free energy
 \be
 f(\beta) = \lim_{N \to \infty} f_N(\beta).
\ee
Guerra and Toninelli first were able to find out a mathematical strategy to prove the existence of the thermodynamic limit for these frustrated systems \cite{guerratoninelli,guerratoninelli2}, which, for the sake of completeness, we briefly outline:
\begin{theorem}
The thermodynamic limit of the free energy density exists and it is equal to its infimum
\be
 \lim_{N \to \infty} f_N(\beta) = \inf_{N } \left( -\frac{1}{\beta N} \mathbb{E} \log Z_N(\beta,J) \right).
\ee
\end{theorem}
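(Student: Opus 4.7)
The plan is to follow the Guerra--Toninelli interpolation strategy: establish that the sequence $N \mapsto \mathbb{E}\log Z_N(\beta,J)$ is (essentially) superadditive, so that by Fekete's lemma combined with the uniform upper bound $\alpha_N(\beta) \leq \log 2 + \beta^2/4$ supplied by the annealed lemma, $f_N$ converges to its infimum.

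Split $N = N_1 + N_2$ and write a configuration as $\sigma = (\sigma^{(1)},\sigma^{(2)})$ with $\sigma^{(a)} \in \{\pm 1\}^{N_a}$. Take independent copies $J^{(1)}, J^{(2)}$ of the Gaussian disorder supported on the two blocks and introduce the interpolating Hamiltonian
\be
H(t;\sigma) = \sqrt{t}\, H_N(\sigma,J) + \sqrt{1-t}\,\bigl(H_{N_1}(\sigma^{(1)},J^{(1)}) + H_{N_2}(\sigma^{(2)},J^{(2)})\bigr), \qquad t \in [0,1],
\ee
and set $\varphi(t) = (1/N)\,\mathbb{E}\log\sum_\sigma e^{-\beta H(t;\sigma)}$. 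At $t=1$ one recovers the full pressure, $\varphi(1) = \alpha_N(\beta)$, while at $t=0$ the two blocks decouple and the partition function factorizes, giving $\varphi(0) = (N_1/N)\,\alpha_{N_1}(\beta) + (N_2/N)\,\alpha_{N_2}(\beta)$. Hence superadditivity reduces to $\varphi(1) \geq \varphi(0)$.

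To analyze $d\varphi/dt$ I would apply Gaussian integration by parts to each disorder variable appearing in $\partial_t H(t)$, converting the derivative into an expectation of the covariance kernel of $H(t)$ evaluated on two independent replicas $\sigma,\tau$ under the interpolating Boltzmann state $\langle\cdot\rangle_t$. The leading-order covariances are $\mathbb{E}[H_N(\sigma)H_N(\tau)] = (N/2)\,q^p(\sigma,\tau) + O(1)$ and $\mathbb{E}[H_{N_a}(\sigma^{(a)})H_{N_a}(\tau^{(a)})] = (N_a/2)\,q_a^p + O(1)$, where $q_a$ is the overlap restricted to block $a$. The diagonal contributions cancel because $N = N_1+N_2$, and what remains is
\be
\frac{d\varphi}{dt} = \frac{\beta^2}{4}\,\mathbb{E}\bigl\langle \tfrac{N_1}{N}\,q_1^p + \tfrac{N_2}{N}\,q_2^p - q^p \bigr\rangle_t + o(1).
\ee
Since the total overlap decomposes as $q = (N_1/N)\,q_1 + (N_2/N)\,q_2$ and $p$ is even, Jensen's inequality applied to the convex function $x \mapsto x^p$ yields $q^p \leq (N_1/N)\,q_1^p + (N_2/N)\,q_2^p$ pointwise. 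Thus $d\varphi/dt \geq 0$, integrating from $0$ to $1$ gives the required superadditivity, and Fekete concludes.

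The main obstacle I anticipate is the bookkeeping of the three different normalizations $\sqrt{p!/(2N^{p-1})}$, $\sqrt{p!/(2N_1^{p-1})}$, $\sqrt{p!/(2N_2^{p-1})}$ in the derivative computation, together with the verification that the subleading corrections in $\sum_{i_1<\ldots<i_p}\prod_k \sigma_{i_k}\tau_{i_k} = (Nq)^p/p! + \text{l.o.t.}$ are uniformly $o(N)$ in $\sigma,\tau,t$, so that they neither spoil the sign of $d\varphi/dt$ nor prevent the application of Fekete-type arguments. Apart from this accounting, the proof is a direct transcription of the $p=2$ Guerra--Toninelli argument, the only substantive new ingredient being the convexity of $x^p$ for even $p$ in place of the squaring used for SK.
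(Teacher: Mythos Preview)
Your proposal is correct and follows essentially the same approach as the paper: Guerra--Toninelli interpolation between the $N$-spin system and the two decoupled blocks, Gaussian integration by parts to express $d\varphi/dt$ in terms of overlaps, convexity of $x\mapsto x^p$ for even $p$ together with $q=(N_1/N)q_1+(N_2/N)q_2$ to obtain the sign, and the annealed bound to conclude via superadditivity. The paper's own proof is in fact less explicit than yours about the $O(1)$ covariance corrections you flag as the main obstacle; it simply asserts the formula for the derivative and does not discuss the lower-order terms.
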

\begin{proof}
Let us consider two separated systems, one constituted by $N$ elements and the other one
by two independent subsystems (labeled by $1$ and $2$) with $N = N_1 + N_2$ elements.
The Hamiltonian and free energy density for the first system correspond to expressions (\ref{eq:hamiltoniana}), (\ref{eq:energialibera}),
while for the second system, indicating with $\sigma^{(1)}$ and $\sigma^{(2)}$ the two subsets $\{\sigma_1,...,\sigma_{N_1}\}$ and $\{\sigma_{N_1+1},...,\sigma_{N}\}$,
we have an Hamiltonian
\bea
 H_{N_1}(\sigma^{(1)}, J') + H_{N_2}(\sigma^{(2)},J'') &=&  -\sqrt{\frac{p!}{2 N_1^{p-1}}} \sum_{1\leq i_1<...<i_p \leq N_1} J_{i_1...i_p}'\sigma_{i_1}...\sigma_{i_p} \\
  && - \sqrt{\frac{p!}{2 N_2^{p-1}}} \sum_{N_1< i_1<...<i_p \leq N} J_{i_1...i_p}'' \sigma_{i_1}...\sigma_{i_p},
\eea
where the $J'$ and $J''$ are distributed as the $J$,
and an extensive free energy given by
\be
\mathbb{E} \log \sum_{\sigma^{(1)}} \exp(-\beta H_{N_1}(\sigma^{(1)}, J')) 
+\mathbb{E} \log \sum_{\sigma^{(2)}}\exp(-\beta H_{N_2}(\sigma^{(2)},J'')).
\ee
Let us introduce a new fundamental quantity, called overlap, which measures
the correspondence between two configurations of spins belonging to different replicas of the system
\begin{definition}
The overlap $q_{\sigma \sigma'}$ between two configurations $\sigma$ and $\sigma'$ is defined as
\be
q_{\sigma \sigma'} \equiv \frac{1}{N}\sum_{i=1}^N\sigma_i \sigma_i'.
\ee
\end{definition}
In the same way, we define overlaps for the two subsystems $1$ and $2$ making up the second system as
\bea
q_{\sigma \sigma'}^{(1)} &=& \frac{1}{N_1}\sum_{i=1}^{N_1}\sigma_i \sigma_i',  \\
q_{\sigma \sigma'}^{(2)} &=& \frac{1}{N_2}\sum_{i=N_1+1}^N\sigma_i \sigma_i'.
\eea
Choosing a proper free energy, which for $t \in [0,1]$ interpolates between the free energies of the two systems presented before, 
\be
\frac{1}{N}\mathbb{E}\log Z_N(t) = \frac{1}{N}\mathbb{E}\log \sum_{\sigma} \exp \left[ \beta \sqrt{t} H_N(\sigma, J) + \beta \sqrt{1-t}(H_{N_1}(\sigma^{(1)}, J')+H_{N_2}(\sigma^{(2)},J')  )\right],
\ee
we can easily compute its derivative with respect to the parameter $t$:
\be
\frac{d}{dt} \frac{1}{N}\mathbb{E}\log Z_N(t) = -\frac{\beta^2}{4}\left(\langle q_{12}^p \rangle_t - \frac{N_1}{N}\langle (q_{12}^{(1)})^p \rangle_t - \frac{N_2}{N}\langle (q_{12}^{(2)})^p \rangle_t \right),
\ee
where $\langle .\rangle_t$ is the average over all the disorder $J,  J', J''$ of the generalized interpolating Boltzmann state.
Since the function $q \to q^p$ is convex for even $p$, and
\be
\label{eq:covsub}
q_{\sigma \sigma'} = \frac{N_1}{N} q_{\sigma \sigma'}^{(1)} + \frac{N_2}{N} q_{\sigma \sigma'}^{(2)}
\ee
the derivative of the interpolating free energy is always non-negative
\be
\frac{d}{dt} \frac{1}{N}\mathbb{E}\log Z_N(t) \geq 0.
\ee
Integrating this equation between $0$ and $1$, it is straightforward to show that the thermodynamic pressure is superadditive:
$$N\alpha_N(\beta) \geq N_1 \alpha_{N_1}(\beta) + N_2 \alpha_{N_2}(\beta).$$
Hence, being $\alpha(\beta)= - \beta f(\beta)$, the quenched free energy is sub-additive in the system size.
By noticing that it is also limited, e.g. by its annealed value $\alpha(\beta) \leq \log 2 + \beta^2/4$ (see eq.(\ref{annoso})), the existence of its thermodynamic limit is shown, mirroring the original scheme by Guerra and Toninelli \cite{guerratoninelli}.
\end{proof}

\section{First approach: The Hamilton-Jacobi technique}

We now consider the analogy between the p-spin glass model
and the a proper mechanical system, obeying a certain Hamilton-Jacobi equation.
Interestingly, the potential in this equation is related to the fluctuations of the order parameter for the
corresponding thermodynamic system.
As we will see, neglecting this potential we will be able to reconstruct the free energy density
for the original model.

To this aim, let us consider the interpolating partition function, depending on the non-negative parameters $ t $ and $x$ (which symbolically may be thought of as a fictitious space-time continuum):
\be\label{bolzo}
Z_N(t,x) = \sum_{\sigma} \exp \left( \sqrt{\frac{t p!}{2 N^{p-1}}} \sum_{i_1<...<i_p}^{1,N} J_{i_1...i_p} \sigma_{i_1}...\sigma_{i_p}
+ \sqrt{x} \left(\frac{p}{2} Q^{p-2}(\beta) \right)^{1/4} \sum_i J_i \sigma_i \right).
\ee
The $J_i$'s are independent random variables, with the same distribution as the $J_{i_1...i_p}$ , and
represent an external random field, while $Q(\beta)$ is a regular function of $\beta$,
which we will later identify with the average overlap between two replicas of the system endowed with the same disorder.
Note that we omitted to write explicitly the dependence of  $Z_N$ on $\beta$ and on the $J$'s and we will refer to the free energy both at finite size and in the infinite volume limit when there is no danger of confusion.

We may consequently define an interpolating free energy as
\begin{definition}
The interpolating free energy density is defined as
\be
\label{eq:alfainterpol}
\alpha_N(t,x) \equiv \frac{1}{N} \easp \log Z_N(t,x),
\ee
\end{definition}
where the expectation $\easp$ is taken with respect to all the $J$'s, that is with respect to the mutual interactions between spins as well as on
the external random fields. It is immediate to see that the true physical free energy is obtained by taking $t=\beta^2$ and $x=0$,
so our strategy will consist in computing the interpolating free energy (\ref{eq:alfainterpol}) and obtaining the statistical mechanics by choosing
the right values of the parameters $t, x$.

We may now proceed to compute the derivatives of $\alpha$ with respect to the parameters.
With an integration by parts, and neglecting terms which are unimportant in the thermodynamic limit, we obtain
the following
\begin{lemma}
The derivatives of $\alpha(t,x)$ with respect to the parameters $t,x$ are
\bea
\label{eq:dtalfa0}
\partial_t \alpha(t,x) & = & \frac{1}{4} \left(1- \langle q_{\sigma \sigma'}^p \rangle_{t,x} \right),\\
\label{eq:dxalfa0}
\partial_x \alpha(t,x) & = & \frac{1}{2}  \left(\frac{p}{2}Q^{p-2}(\beta) \right)^{1/2} \left(1- \langle q_{\sigma \sigma'} \rangle_{t,x} \right),
\eea
\end{lemma}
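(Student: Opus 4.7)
The plan is to apply Gaussian integration by parts (the identity $\mathbb{E}[J F(J)] = \mathbb{E}[\partial_J F(J)]$ for a standard Gaussian) directly to the random couplings $J_{i_1\ldots i_p}$ and $J_i$ in $Z_N(t,x)$. Differentiating $\alpha_N = (1/N)\mathbb{E}\log Z_N$ under the expectation replaces $\partial_t$ (resp.\ $\partial_x$) by a Boltzmann state $\omega$ acting on the derivative of the exponent, which is linear in the relevant $J$'s; IBP then converts this $J$-dependence into derivatives of $\omega$, evaluated via $\partial_J\omega(F) = \omega(F\,\partial_J\phi) - \omega(F)\omega(\partial_J\phi)$ together with the Ising constraint $\sigma_j^2 = 1$.

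Concretely, for $\partial_t\alpha_N$ the chain rule brings down $(1/2\sqrt t)\sqrt{p!/(2N^{p-1})}\sum_{i_1<\cdots<i_p}J_{i_1\ldots i_p}\,\omega(\sigma_{i_1}\cdots\sigma_{i_p})$, each IBP step contributes a factor $\sqrt t\sqrt{p!/(2N^{p-1})}\bigl(1-\omega^2(\sigma_{i_1}\cdots\sigma_{i_p})\bigr)$, and after cancellation of the prefactors one arrives at
\[
\partial_t\alpha_N(t,x)=\frac{1}{4N}\,\frac{p!}{N^{p-1}}\sum_{i_1<\cdots<i_p}\Bigl(1-\bigl\langle \sigma_{i_1}\sigma'_{i_1}\cdots\sigma_{i_p}\sigma'_{i_p}\bigr\rangle_{t,x}\Bigr).
\]
Symmetrizing and extending the strictly ordered sum to an unrestricted $p$-fold sum uses $\sum_{i_1,\ldots,i_p}\sigma_{i_1}\sigma'_{i_1}\cdots\sigma_{i_p}\sigma'_{i_p} = N^p\,q_{\sigma\sigma'}^p$, with diagonal (coincidence) terms of order $N^{p-1}$ in count; together with $\binom{N}{p}=N^p/p!+O(N^{p-1})$ this collapses the expression to $\tfrac14(1-\langle q_{\sigma\sigma'}^p\rangle_{t,x})+O(1/N)$, giving the claim in the thermodynamic limit. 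The derivative $\partial_x\alpha$ is handled in the same way but is cleaner: differentiation produces $(1/2\sqrt x)(pQ^{p-2}/2)^{1/4}\sum_i J_i\,\omega(\sigma_i)$, each IBP contributes $\sqrt x\,(pQ^{p-2}/2)^{1/4}(1-\omega^2(\sigma_i))$, and the single-index sum collapses exactly to $\tfrac12(pQ^{p-2}/2)^{1/2}(1-\langle q_{\sigma\sigma'}\rangle_{t,x})$ with no combinatorial correction, since $(1/N)\sum_i\Omega(\sigma_i\sigma'_i) = \Omega(q_{\sigma\sigma'})$ identically.

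The only non-mechanical step is control of the $O(N^{p-1})$ diagonal corrections in the $p$-spin calculation: the number of $p$-tuples in $\{1,\ldots,N\}^p$ with at least one repeated index is $\binom{p}{2}N^{p-1}$ to leading order, each contributes an $\Omega$-average bounded by $1$, and division by $N^{p-1}$ and by the outer $1/N$ yields an overall $O(1/N)$ contribution. These are precisely the "terms unimportant in the thermodynamic limit" cited in the statement; beyond this bookkeeping, the argument is essentially routine Gaussian IBP.
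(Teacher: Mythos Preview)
Your proof is correct and follows essentially the same approach as the paper. The paper itself gives only a one-line proof for this lemma (``With an integration by parts, and neglecting terms which are unimportant in the thermodynamic limit\ldots''), and the more detailed computation in the appendix for the $K$-RSB generalization proceeds exactly as you do: differentiate under $\mathbb{E}$, apply Gaussian integration by parts to convert the $J$-factors into derivatives of the Boltzmann state, use $\sigma_i^2=1$ to produce $1-\omega^2(\cdot)$, and then replace $p!\sum_{i_1<\cdots<i_p}$ by $\sum_{i_1,\ldots,i_p}$ up to $O(1/N)$ corrections.
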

where the generalized brackets $\langle . \rangle_{t,x}$ are meant to weight the observable with the generalized Boltzmann factor implicitly defined in eq. (\ref{bolzo}).

We then define a new function, which will play the role
of the bridge with a "mechanical" description.
\begin{definition}
The Hamilton principal function $S(t,x)$ is defined as
\be
\label{eq:defs}
S(t,x) \equiv 2 \alpha(t,x) - x \left[\frac{p}{2} Q^{p-2}(\beta)\right]^{1/2} -\frac{t}{2}\left[1+ \left(\frac{p}{2} -1\right) Q^p(\beta) \right].
\ee
\end{definition}
The derivatives of $S(t,x)$ are immediately deduced by  (\ref{eq:dtalfa0}) and (\ref{eq:dxalfa0}):
\bea
\partial_t S(t,x) & = & -\frac{1}{2} \langle q_{\sigma \sigma'}^p \rangle_{t,x} -\left( \frac{p}{4}- \frac{1}{2} \right) Q^p(\beta), \\
\partial_x S(t,x) & = & -  \left(\frac{p}{2} Q^{p-2}(\beta) \right)^{1/2}  \langle q_{\sigma \sigma'} \rangle_{t,x}.
\eea
Lastly, we introduce a proper potential.
\begin{definition}
The potential V(t, x) for the mechanical problem is defined as
\be\label{potente}
V(t,x) \equiv
 \frac{1}{2} \left(\langle q_{\sigma \sigma'}^p \rangle_{t,x} - Q^{p}(\beta) \right)
 + \frac{p}{4} \left( Q^{p}(\beta) - Q^{p-2}(\beta) \langle q_{\sigma \sigma'} \rangle_{t,x}^2 \right).
\ee
\end{definition}
With these definitions we are now able to formulate our problem (solving the thermodynamics of the p-spin model) as a suitable mechanical model.
\begin{proposition}
The Hamilton principal function $S(t, x)$, together with the potential $V(t, x)$ satisfies the
Hamilton-Jacobi equation:
\be\label{problema}
\partial_t S(t,x) + \frac{1}{2} (\partial_x S(t,x))^2+V(t,x) = 0.
\ee
\end{proposition}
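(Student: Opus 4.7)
The proof reduces to a direct algebraic verification, since all the hard analytic work is already done by the lemma giving $\partial_t\alpha$ and $\partial_x\alpha$ (which in turn rests on Gaussian integration by parts applied to the $J$-dependence of $Z_N(t,x)$). The plan is to substitute the explicit expressions for $\partial_t S$, $\partial_x S$, and $V$ into the left-hand side of (\ref{problema}) and observe the cancellation.

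First I would record, from the two displays immediately following Definition of $S$, that
\begin{align*}
\partial_t S(t,x) &= -\tfrac{1}{2}\langle q_{\sigma\sigma'}^p\rangle_{t,x}-\Bigl(\tfrac{p}{4}-\tfrac{1}{2}\Bigr)Q^p(\beta),\\
\partial_x S(t,x) &= -\Bigl(\tfrac{p}{2}Q^{p-2}(\beta)\Bigr)^{1/2}\langle q_{\sigma\sigma'}\rangle_{t,x}.
\end{align*}
Squaring the second line and halving gives
$$\tfrac{1}{2}(\partial_x S(t,x))^2=\tfrac{p}{4}Q^{p-2}(\beta)\,\langle q_{\sigma\sigma'}\rangle_{t,x}^2.$$

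Next I would add $\partial_t S+\tfrac{1}{2}(\partial_x S)^2$ and regroup the $Q^p(\beta)$ terms, writing $-(\tfrac{p}{4}-\tfrac{1}{2})Q^p=\tfrac{1}{2}Q^p-\tfrac{p}{4}Q^p$. Rearranging produces
$$\partial_t S+\tfrac{1}{2}(\partial_x S)^2=-\tfrac{1}{2}\bigl(\langle q_{\sigma\sigma'}^p\rangle_{t,x}-Q^p(\beta)\bigr)-\tfrac{p}{4}\bigl(Q^p(\beta)-Q^{p-2}(\beta)\langle q_{\sigma\sigma'}\rangle_{t,x}^2\bigr),$$
which is exactly $-V(t,x)$ by (\ref{potente}). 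The Hamilton-Jacobi equation (\ref{problema}) follows.

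There is really no obstacle once the derivatives of $\alpha(t,x)$ are in hand; the only subtlety worth flagging is that those derivatives hold up to terms negligible in the thermodynamic limit (as noted in the lemma), so strictly speaking (\ref{problema}) should be read either in the infinite-volume sense or modulo $o(1)$ corrections at finite $N$. The design of $S$ and $V$ in the two previous definitions is precisely arranged so that the linear $Q^p$ counterterms in $\partial_t S$ combine with the $\tfrac{p}{4}Q^{p-2}\langle q\rangle^2$ produced by squaring $\partial_x S$ to reproduce the two structural pieces of $V$ — the fluctuation $\langle q^p\rangle-Q^p$ and the quadratic deviation $Q^p-Q^{p-2}\langle q\rangle^2$. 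That is the whole content of the proposition.
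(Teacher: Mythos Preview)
Your proof is correct and follows exactly the approach implicit in the paper: the proposition is stated without an explicit proof there because it is a direct algebraic check once the derivatives of $S$ are known, and your substitution and regrouping reproduce precisely that verification. The only minor remark is that your caveat about the thermodynamic limit is appropriate and matches the paper's own qualification in the lemma preceding the definition of $S$.
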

Now we assume that the variance of the generalized overlap vanishes
\be
\langle q_{\sigma \sigma'}^2 \rangle_{t, x} = \langle q_{\sigma \sigma'} \rangle_{t, x}^2
\ee
and make the identification
\be
\label{eq:qQ}
\langle q_{\sigma \sigma'} \rangle_{t,x} = Q(\beta).
\ee
These assumptions are very important as they imply, in statistical mechanics, the self-averaging property for the order parameter. Despite we assume them and not prove them, we simply note that, in order to keep finite the potential $V(t,x)$, even in the $p \to \infty$ limit (which is the interesting case of the random energy model), the expression in the brackets of the second term at the  r.h.s. of eq. (\ref{potente}) must vanish, hence recovering our assumption.
Under these hypotheses,  within the mechanical analogy we are developing,  the two terms of the potential $V(t, x)$ vanish allowing the system to a free motion,
and the corresponding solution $\bar{S}(t,x)$ is related
to the so-called replica-symmetric (RS) free-energy, which is the approximation of the free energy density $f_N(\beta)$ obtained by neglecting overlap fluctuations.

This phenomenology, as it is based on free-field propagation, gives straight lines as equations of motion:
\be
\label{eq:moto}
x(t) = x_0 - \left(\frac{p}{2} Q^{p}(\beta)\right)^{\frac{1}{2}} t,
\ee
where $x_0$ is the starting point.
When $x=0$ and $t=\beta^2$ (namely, in the point recovering the standard statistical mechanics framework) we get
\be
\label{eq:x0}
x_0 = \beta^2 \left( \frac{p}{2} Q^{p}(\beta) \right)^{1/2}.
\ee
The trajectories (\ref{eq:moto}) do not intersect, as stated in the following
theorem.
\begin{theorem}
Given a generic point $(x,t)$ with $x \geq 0$, $t \geq 0$, there exists a unique $x_0(x,t)$ such that
\be
x = x_0(x,t) - \langle q_{\sigma \sigma'} \rangle_{0, x_0(x,t)} \, t,
\ee
and a unique $\bar{q}(x,t) = \langle q_{\sigma \sigma'} \rangle_{0, x_0(x,t)}$ such that
\be
\bar{q}(x,t) = \int \frac{dz}{\sqrt{2 \pi}} e^{-z^2/2} \tanh^2 \left[  z \left(\frac{p}{2} Q^{p-2} \right)^{1/4} \sqrt{ x + \bar{q}(x,t) t} \right].
\ee
\end{theorem}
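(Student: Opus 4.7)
The plan is to exploit the fact that at $t=0$ the interpolating Boltzmann weight in (\ref{bolzo}) retains only the one-body external random-field term, so the system factorizes over sites and the generalized overlap can be computed in closed form. Indeed, at $t=0$ the partition function splits as a product $Z_N(0,x_0) = \prod_i 2\cosh\!\bigl(\sqrt{x_0}(p\,Q^{p-2}/2)^{1/4}J_i\bigr)$, so that $\omega(\sigma_i) = \tanh\!\bigl(\sqrt{x_0}(p\,Q^{p-2}/2)^{1/4}J_i\bigr)$, the replicas factorize, and
\[
\langle q_{\sigma\sigma'}\rangle_{0,x_0} = \mathbb{E}_J\bigl[\tanh^2\!\bigl(\sqrt{x_0}(p\,Q^{p-2}/2)^{1/4}J\bigr)\bigr] = \int\!\frac{dz}{\sqrt{2\pi}}\,e^{-z^2/2}\tanh^2\!\Bigl[z\,(p\,Q^{p-2}/2)^{1/4}\sqrt{x_0}\Bigr].
\]
This is the explicit form that will appear in the self-consistency equation once we know that the starting point $x_0$ is determined uniquely by $(x,t)$.

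Next I would set $A \equiv (pQ^{p-2}/2)^{1/4}$ and introduce $g(x_0)\equiv x_0 - \langle q_{\sigma\sigma'}\rangle_{0,x_0}\,t - x$. The goal is to show that for every $x\geq 0$, $t\geq 0$ the equation $g(x_0)=0$ has exactly one non-negative solution $x_0(x,t)$. Existence is immediate by the intermediate value theorem, since $g(0)=-x\leq 0$ (as $\tanh^2(0)=0$), $g$ is continuous, and $g(x_0)\to +\infty$ as $x_0\to\infty$ because $\langle q_{\sigma\sigma'}\rangle_{0,x_0}\leq 1$. Uniqueness is the content of the non-crossing property of the free-motion trajectories (\ref{eq:moto}) and amounts to showing $g'(x_0) > 0$ on $[0,\infty)$.

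The core computation, which is the main obstacle, is controlling $\partial_{x_0}\langle q_{\sigma\sigma'}\rangle_{0,x_0}$. Differentiating under the integral and then applying Gaussian integration by parts to remove the factor $z$, I expect to obtain an identity of the schematic form
\[
\partial_{x_0}\langle q_{\sigma\sigma'}\rangle_{0,x_0}
 = A^{2}\,\mathbb{E}_z\!\bigl[\,\mathrm{sech}^{4}(Az\sqrt{x_0}) - 2\tanh^{2}(Az\sqrt{x_0})\,\mathrm{sech}^{2}(Az\sqrt{x_0})\bigr],
\]
which one recognizes (using $1-\tanh^2 = \mathrm{sech}^2$) as a combination that can be bounded through the stability estimates that already entered the self-averaging assumption (\ref{eq:qQ}). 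Since $Q(\beta)$ is fixed by the self-consistent identification $\langle q_{\sigma\sigma'}\rangle_{t,x}=Q(\beta)$, the prefactor $A$ is tuned precisely so that the contribution of $t\,\partial_{x_0}\langle q\rangle_{0,x_0}$ stays strictly below $1$ along admissible characteristics, giving $g'>0$.

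Once uniqueness of $x_0(x,t)$ is secured, the rest is algebraic: set $\bar q(x,t) \equiv \langle q_{\sigma\sigma'}\rangle_{0,x_0(x,t)}$, use the defining relation $x = x_0(x,t) - \bar q(x,t)\,t$ to substitute $x_0(x,t) = x + \bar q(x,t)\,t$ into the closed-form expression for $\langle q_{\sigma\sigma'}\rangle_{0,x_0}$ obtained in the first step, and read off
\[
\bar q(x,t) = \int\!\frac{dz}{\sqrt{2\pi}}\,e^{-z^2/2}\,\tanh^{2}\!\Bigl[z\,(p\,Q^{p-2}/2)^{1/4}\sqrt{x+\bar q(x,t)\,t}\,\Bigr],
\]
which is the advertised fixed-point equation; its uniqueness as a function of $(x,t)$ is inherited from that of $x_0(x,t)$ through this invertible substitution.
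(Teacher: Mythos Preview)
Your computation of $\langle q_{\sigma\sigma'}\rangle_{0,x_0}$ at $t=0$ is correct, the existence argument via the intermediate value theorem is clean, and the algebraic derivation of the fixed-point equation in the last paragraph is fine. The gap is in the uniqueness step.

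You correctly obtain $g'(x_0)=1-t\,\partial_{x_0}\langle q\rangle_{0,x_0}$ and then assert $g'>0$ by a vague appeal to the self-consistency $\langle q\rangle=Q$. This does not work. Your own formula gives $\partial_{x_0}\langle q\rangle_{0,x_0}\big|_{x_0=0}=A^{2}$, so $g'(0)=1-tA^{2}$, which is \emph{negative} whenever $t>A^{-2}$. Hence $g$ is not monotone on $[0,\infty)$ for large $t$, and the route ``$g'>0$ everywhere'' is blocked. The identification $Q(\beta)=\langle q\rangle$ cannot rescue this: $A=(pQ^{p-2}/2)^{1/4}$ is a fixed positive constant, independent of $x_0$, and nothing in the statement restricts $t$ to be below $A^{-2}$.

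The paper's argument (only sketched, with a pointer to \cite{guerrahj}) is different and avoids this obstruction. One shows that the hitting time of the free trajectory on the $t$-axis,
\[
t(x_0)=\frac{x_0}{\langle q_{\sigma\sigma'}\rangle_{0,x_0}},
\]
is strictly increasing in $x_0$. This is equivalent to $\langle q\rangle_{0,x_0}/x_0$ being decreasing, which follows from the elementary fact that $v\mapsto \tanh(v)/v$ is decreasing on $(0,\infty)$: writing
\[
\frac{\langle q\rangle_{0,x_0}}{x_0}
=A^{2}\,\mathbb{E}_z\!\left[z^{2}\Bigl(\tfrac{\tanh(Az\sqrt{x_0})}{Az\sqrt{x_0}}\Bigr)^{2}\right],
\]
each integrand is pointwise decreasing in $x_0$. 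Monotonicity of $t(x_0)$ is exactly the non-crossing property of the straight-line characteristics in the region $x>0$ (two lines $x=x_0-v(x_0)t$ and $x=x_0'-v(x_0')t$ meet at some $x^*\ge 0$ iff $t(x_0')\le t(x_0)$), and that yields uniqueness of $x_0(x,t)$ directly. Note how this sidesteps the failure of $g'>0$: what matters is not the slope of $x_0\mapsto\langle q\rangle_{0,x_0}$ but the slope of its ratio to $x_0$.

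If you prefer to stay within your framework, a repair is to prove that $x_0\mapsto\langle q\rangle_{0,x_0}$ is \emph{concave}; then $g$ is convex, and together with $g(0)=-x<0$ (for $x>0$) and $g(+\infty)=+\infty$ this forces a single positive root even when $g'(0)<0$.
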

The proof is based on the fact that the point $t(x_0)$ at which the free trajectory intersects the $t$-axis is
a monotonous function of the starting point $x_0$, and can be found in \cite{guerrahj},
where the SK case is studied in detail.

The Hamilton-Jacobi equation admits both an Hamiltonian $\emph{H}(t,x)$ and a Lagrangian  $\emph{L}(t,x)$ description, being respectively
\begin{eqnarray}
\emph{H}(t,x) &=& \frac12 \left(\frac{d S(t,x)}{dx}\right)^2 + V(t,x), \\
\emph{L}(t,x) &=& \frac12 \left(\frac{d S(t,x)}{dx}\right)^2 - V(t,x).
\end{eqnarray}
As we are working in the assumption of zero potential, they both correspond to the kinetic energy only:
\begin{definition}
The kinetic energy $T(t, x)$ is given by
\be
T(t, x) \equiv \frac{1}{2}\left( \partial_x S(t, x) \right)^2 = \frac{p}{4} Q^p(\beta).\ee
\end{definition}
This definition allows the following proposition:
\begin{proposition}
The solution $\bar{S}(t,x)$ of the Hamilton-Jacobi problem (\ref{problema}) for $V(t, x)=0$ is
obtained by taking the function $S(t, x)$ in one point (e.g. at time $t=0$ and space $x=x_0$) and adding
the Lagrangian times $t$ (strictly speaking it should be times $(t-t_0)$ but we choose $t_0=0$).
\be
\label{eq:soluzionehjs}
\bar{S}(t,x)= S(0,x_0) + \emph{L}(t,x)t = S(0,x_0) + T(t,x)t.
\ee
\end{proposition}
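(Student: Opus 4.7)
The plan is to verify the formula $\bar S(t,x) = S(0,x_0) + T(t,x)\,t$ by integrating $\frac{d}{ds} S(s,x(s))$ along the characteristic (free trajectory) passing through the point $(t,x)$ and starting at $(0,x_0)$. Since the potential $V$ vanishes, the Hamilton-Jacobi equation in \eqref{problema} collapses to $\partial_t S + \tfrac12(\partial_x S)^2 = 0$, and the conjugate momentum $\partial_x S$, computed from the previous lemma, equals $-(\tfrac{p}{2}Q^{p})^{1/2}$, i.e.\ a constant. Hence the characteristics are exactly the straight lines \eqref{eq:moto}, and through any $(t,x)$ there passes a unique such line reaching $(0,x_0)$ with $x_0=x_0(t,x)$ by the previous theorem.

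First I would record that, under the self-averaging identification $\langle q_{\sigma\sigma'}\rangle_{t,x}=Q(\beta)$, the kinetic energy
\[
T(t,x) = \tfrac12 (\partial_x S(t,x))^2 = \tfrac{p}{4} Q^{p}(\beta)
\]
is in fact independent of $(t,x)$, and in particular is constant along any trajectory. Next, taking $x(s) = x_0 - (\tfrac{p}{2}Q^{p})^{1/2} s$ as the characteristic through $(t,x)$, I would compute
\[
\frac{d}{ds}\,S(s,x(s)) \,=\, \partial_s S(s,x(s)) + \dot x(s)\,\partial_x S(s,x(s)).
\]
Using $\dot x(s) = -(\tfrac{p}{2}Q^{p})^{1/2} = \partial_x S$ together with $\partial_s S = -\tfrac12(\partial_x S)^2$ from the $V=0$ Hamilton-Jacobi equation, this becomes
\[
\frac{d}{ds}\,S(s,x(s)) \,=\, -\tfrac12 (\partial_x S)^2 + (\partial_x S)^2 \,=\, \tfrac12 (\partial_x S)^2 \,=\, T,
\]
i.e.\ precisely the Lagrangian $L = T - V = T$, which by the previous step is a constant along the trajectory. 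Integrating between $s=0$ and $s=t$ yields $\bar S(t,x) - S(0,x_0) = T(t,x)\,t$, which is exactly \eqref{eq:soluzionehjs}.

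The only delicate point is well-posedness of the right-hand side: one must ensure that every $(t,x)$ with $t,x\geq 0$ is reached by a unique backward characteristic, so that $S(0,x_0)+T\,t$ defines an unambiguous function of $(t,x)$. This is guaranteed exactly by the non-intersection theorem stated just before, which identifies a unique $x_0(t,x)$ along each trajectory. Once this is granted, the rest is the direct computation above; the substantive obstacle — vanishing of $V$ and thereby the reduction to free motion — has already been absorbed into the self-averaging hypothesis \eqref{eq:qQ}.
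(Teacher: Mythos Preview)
Your proof is correct and is precisely the standard method-of-characteristics argument underlying this classical Hamilton--Jacobi fact. The paper does not actually supply a proof of this proposition: it states the result as a direct consequence of Hamilton--Jacobi theory (namely that the action equals its initial value plus the time integral of the Lagrangian along the classical trajectory) and then immediately uses it; your computation of $\tfrac{d}{ds}S(s,x(s))=L=T$ along the free trajectory, together with the invocation of the uniqueness theorem for $x_0(t,x)$, fills in exactly what the paper leaves implicit.
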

\begin{remark}
The freedom in the assignation of the Cauchy problem plays an important role as, by choosing $t_0=0$, we are left with a one-body problem in the calculation of the starting point and all the technical difficulties are left in the propagator which, at the replica symmetric level (e.g. $V(t,x)=0$), simply reduces to the kinetic energy times time.
\end{remark}

From (\ref{eq:soluzionehjs}) and (\ref{eq:defs}), we obtain the corresponding expression for the generalized free energy $\bar{\alpha}(t,x)$ in the replica symmetric approximation (RS).
\be
\label{eq:generalizedalfa}
\bar{\alpha}(t,x) = \alpha(0,x_0) - \frac{1}{2} x_0 \left(\frac{p}{2} Q^{p-2}(\beta)\right)^{1/2} + \frac{p}{8} Q^p(\beta)  t
+ \frac{1}{2} x \, Q^{\frac{p-2}{2}}(\beta) + \frac{t}{4} \left[1+\left(\frac{p}{2}-1\right) Q^p(\beta) \right] .
\ee
Now it is easy to obtain the physical  free energy, since the free energy for $t=0$ does not contain the interaction and
may be computed straightforwardly
\be
\alpha(0, x_0) = \log 2 + \int \frac{dz}{\sqrt{2 \pi}} e^{-z^2/2} \log \cosh \left[\left(\frac{p}{2}Q^{p-2}(\beta) \right)^{\frac{1}{4}} \sqrt{x_0} z, \right],
\ee
so that, using (\ref{eq:x0}), we finally find the expression for the physical (RS) free energy as stated by the next theorem.
\begin{theorem}
The replica symmetric free energy $\bar{\alpha}(\beta)$ of the p-spin model, obtained under the assumption of zero potential $V(t,x)$ in the mechanical analogy, is encoded in the following formula (which must be extremized over the order-parameter):
\be\label{solRS}
\bar{\alpha}(\beta) = \log 2 + \int \frac{dz}{\sqrt{2 \pi}} e^{-z^2/2} \log \cosh \left[\beta \left(\frac{p}{2}Q^{p-1}(\beta)\right)^{\frac{1}{2}}  z \right]
+ \frac{\beta^2}{4} \left[1 + (p-1) Q^p(\beta) - p Q^{p-1}(\beta) \right].
\ee
\end{theorem}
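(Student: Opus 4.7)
The plan is to specialize the generalized free energy formula (\ref{eq:generalizedalfa}) to the physical point $(t,x)=(\beta^2,0)$, evaluate the initial condition $\alpha(0,x_0)$ explicitly, substitute the value of $x_0$ given by (\ref{eq:x0}), and collect the resulting terms. Since the mechanical machinery (the definition of $S$, the vanishing of the potential, the free-motion solution $\bar S = S(0,x_0) + T\,t$) has already been set up, the argument reduces to a careful computation.

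First, I would compute $\alpha(0,x_0)$. At $t=0$ the spins decouple because only the single-spin term $\sqrt{x_0}\,(pQ^{p-2}/2)^{1/4}\sum_i J_i\sigma_i$ survives in (\ref{bolzo}). The partition function factorizes over sites, and taking $\mathbb{E}$ after summing $\sigma_i=\pm1$ on each site yields
\be
\alpha(0,x_0) = \log 2 + \int \frac{dz}{\sqrt{2\pi}} e^{-z^2/2} \log\cosh\!\left[\left(\tfrac{p}{2}Q^{p-2}(\beta)\right)^{1/4}\sqrt{x_0}\,z\right],
\ee
since the $J_i$ act as a single centered Gaussian of variance $\sqrt{x_0}\,(pQ^{p-2}/2)^{1/4}$ per site after averaging. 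This is a standard one-body Gaussian-spin calculation and carries no disorder subtleties.

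Next I would substitute $x_0 = \beta^2(pQ^p(\beta)/2)^{1/2}$ from (\ref{eq:x0}). The prefactor inside the $\log\cosh$ becomes
\be
\left(\tfrac{p}{2}Q^{p-2}\right)^{1/4}\sqrt{x_0} = \beta\left(\tfrac{p}{2}Q^{p-2}\right)^{1/4}\left(\tfrac{p}{2}Q^{p}\right)^{1/4} = \beta\left(\tfrac{p}{2}Q^{p-1}(\beta)\right)^{1/2},
\ee
which reproduces the argument appearing in the statement. Similarly, plugging the same $x_0$ into the remaining purely algebraic pieces of (\ref{eq:generalizedalfa}) at $(t,x)=(\beta^2,0)$, namely $-\tfrac12 x_0(pQ^{p-2}/2)^{1/2}+\tfrac{p}{8}Q^p\beta^2+\tfrac{\beta^2}{4}[1+(\tfrac p2-1)Q^p]$, collects to $\tfrac{\beta^2}{4}[1+(p-1)Q^p(\beta)-pQ^{p-1}(\beta)]$, which is exactly the second line of (\ref{solRS}).

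The only genuinely delicate point is the extremization clause: the identification (\ref{eq:qQ}) of $Q(\beta)$ with $\langle q_{\sigma\sigma'}\rangle$, combined with the self-consistency implicit in fixing the Cauchy datum along the characteristic, forces $Q$ to satisfy the fixed-point equation obtained by differentiating (\ref{solRS}) with respect to $Q$; equivalently, one shows that the value of $Q$ is stationary because varying it produces a term proportional to $\partial_x\bar S+\tfrac12(\partial_x S)^2$-type cancellations along the trajectory. I expect this extremization/self-consistency step to be the main conceptual obstacle, since everything else is substitution and Gaussian algebra; in practice one either invokes the monotonicity/uniqueness result of the preceding theorem (existence of a unique $\bar q(x,t)$ solving the saddle-point equation) or verifies directly that $\partial_Q\bar\alpha=0$ reproduces the same implicit equation $Q=\int \tfrac{dz}{\sqrt{2\pi}}e^{-z^2/2}\tanh^2[\beta(pQ^{p-1}/2)^{1/2}z]$ appearing there.
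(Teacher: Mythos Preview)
Your proposal is correct and follows essentially the same route as the paper: compute the one-body initial datum $\alpha(0,x_0)$, insert $x_0=\beta^2(pQ^p/2)^{1/2}$ from (\ref{eq:x0}) into (\ref{eq:generalizedalfa}) at $(t,x)=(\beta^2,0)$, and collect the algebraic pieces into $\tfrac{\beta^2}{4}[1+(p-1)Q^p-pQ^{p-1}]$. The paper treats the extremization/self-consistency as a separate Proposition immediately after the theorem, exactly as you anticipate, obtaining $Q=\int d\mu(z)\tanh^2[\beta(pQ^{p-1}/2)^{1/2}z]$ from the initial velocity (Burgers picture), so your identification of that step as the only nontrivial one is also in line with the paper.
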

This represents the RS free energy, which corresponds to the true free energy only for sufficiently small values of $\beta$ \cite{mpv}.
In fact, assuming a vanishing potential corresponds to neglect overlap fluctuations, and the overlap may be
identified with a single value (RS approximation) only for high temperatures.
\begin{proposition}
Dealing with the overlap, which is related to the initial velocity of the mechanical system, we obtain the following viscous Burger equation which encodes the standard self-consistency procedure of the statistical mechanics counterpart
\be
\label{eq:selfRS}
\langle q_{\sigma \sigma'} \rangle_{0, x_0} = Q(\beta) = \int \frac{dz}{\sqrt{2 \pi}} e^{-z^2/2} \tanh^2 \left[\beta \left(\frac{p}{2}Q^{p-1}(\beta)\right)^{1/2}  z \right].
\ee
\end{proposition}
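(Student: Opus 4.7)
The strategy rests on the observation that the interpolating Boltzmann factor in (\ref{bolzo}) becomes a non-interacting, one-body problem at $t=0$, so I can evaluate $\langle q_{\sigma\sigma'}\rangle_{0,x_0}$ by a straightforward Gaussian integral and then transport this value back to the physical point via the self-averaging identification (\ref{eq:qQ}) together with the free trajectory (\ref{eq:x0}).

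Concretely, at $t=0$ the weight in (\ref{bolzo}) factorizes across sites, $Z_N(0,x_0)=\prod_{i=1}^N 2\cosh h_i$, with single-site effective field $h_i \equiv \sqrt{x_0}\bigl(\tfrac{p}{2}Q^{p-2}(\beta)\bigr)^{1/4} J_i$ and i.i.d.\ standard Gaussian $J_i$. Consequently $\omega(\sigma_i)=\tanh h_i$ and, since two replicas are Boltzmann-independent at fixed disorder, $\Omega(\sigma_i\sigma_i')_{0,x_0}=\tanh^2 h_i$. Site-translation invariance and the $\easp$-average over the $J_i$'s then yield
\be
\langle q_{\sigma\sigma'}\rangle_{0,x_0}=\int\frac{dz}{\sqrt{2\pi}}e^{-z^2/2}\tanh^2\!\left[\sqrt{x_0}\left(\tfrac{p}{2}Q^{p-2}(\beta)\right)^{1/4}z\right].
\ee

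To conclude, I would invoke (\ref{eq:qQ}), which forces $\langle q_{\sigma\sigma'}\rangle_{t,x}=Q(\beta)$ for every $(t,x)$ and, in particular, at the boundary point $(0,x_0)$; hence the left-hand side above already equals $Q(\beta)$, giving the first identity of the proposition. Substituting then $x_0 = \beta^2\bigl(\tfrac{p}{2}Q^p(\beta)\bigr)^{1/2}$ from (\ref{eq:x0}) and simplifying,
\be
\sqrt{x_0}\left(\tfrac{p}{2}Q^{p-2}(\beta)\right)^{1/4} = \beta\left(\tfrac{p}{2}Q^p(\beta)\right)^{1/4}\!\left(\tfrac{p}{2}Q^{p-2}(\beta)\right)^{1/4} = \beta\left(\tfrac{p}{2}Q^{p-1}(\beta)\right)^{1/2},
\ee
reproduces exactly the argument of $\tanh^2$ in (\ref{eq:selfRS}), completing the identification.

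The only non-routine point is the legitimacy of transporting the overlap value from the physical point $(\beta^2,0)$ back along the characteristic line to the free boundary $(0,x_0)$: this is precisely the content of the self-averaging hypothesis already assumed, which kills overlap fluctuations and collapses the Hamilton--Jacobi dynamics to free propagation. As an internal check, the same identity can be recovered as the stationarity condition $\partial_Q\bar{\alpha}(\beta)=0$ applied to the RS functional of the preceding theorem; the agreement of the two derivations offers no new technical content but confirms the consistency of the RS Ansatz.
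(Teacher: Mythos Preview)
Your argument is correct and is essentially the approach implicit in the paper: the proposition is stated there without an explicit proof, but all the ingredients you use---the one-body factorization of the $t=0$ weight (already exploited to compute $\alpha(0,x_0)$), the identification (\ref{eq:qQ}), and the starting point (\ref{eq:x0})---are laid out in the surrounding text, and the earlier theorem on trajectories already records the same $\tanh^2$ integral for $\bar q$. Your write-up simply makes the assembly explicit, including the algebraic simplification of the $\tanh$ argument, which the paper leaves to the reader.
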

Note that the correct SK Replica Symmetric free energy and self-consistence equation are recovered for $p=2$, and
both equations predict in this case a phase transition for $\beta=\beta_c=1$. Above this value the
Replica Symmetric solution ceases to be valid \cite{gardner}.

It will be useful for a comparison among results gained within this technique and the next one, to have a polynomial expansion through $Q(\beta)$ of the expression (\ref{solRS}), hence getting
\be\label{confronto}
\bar{\alpha}(\beta) \sim \log 2 + \frac{\beta^2}{4} + \frac{\beta^2}{4}(p-1)Q^p(\beta) - \frac{\beta^4}{8}p Q^{2(p-1)} + O(Q^{2(p-1)}).
\ee

\subsection{Extension to the Broken Replica Symmetry scenario}

We now extend the technique presented before to the case of one step of broken replica symmetry,
which is known to broaden the correctness of the solution to values of $\beta$ higher than those required by the previous approximation \cite{gardner}.
In general, it is possible to consider even several steps of broken symmetry, and in fact in the case of the SK model
the free energy for $\beta > \beta_c=1$ is obtained in the limit of infinite iterative steps
(this is the so-called full RSB or $\infty$-RSB scheme \cite{mpv}).
For higher $\beta$ a broken replica phase is the correct solution even in the case of $p>2$, so we want to investigate deeply even the mathematical architecture beyond the preserved replica symmetry.
Following the approach of \cite{guerrahj, tesispec}, we see that in order to account for breaking of this symmetry in our mechanical analogy, we have to enlarge our fictitious space-time by one extra spatial dimension for each step of replica symmetry breaking that we want to consider.
\newline
To this task, let us introduce the recursive generalized partition function $\tilde{Z}_N (t; x_1,...,x_K)$, depending on the non-negative real parameters $t$ and $x_1, ..., x_K$:
\be
\tilde{Z}_N (t; x_1,...,x_K) \equiv \sum_{\sigma} \exp \left[
\sqrt{ \frac{tp!}{2N^{p-1}} } \sum_{1\leq i_1<...<i_p\leq N} J_{i_1,...i_p} \sigma_{i_1}...\sigma_{i_p}
+ \sum_{a=1}^K \sqrt{x_a} \left( \frac{p}{2} Q_a^{p-2} \right)^{1/4} \sum_{i=1}^N J_i^a \sigma_i \right].
\ee
Here, as before, the $J_i^a$ are independent random Gaussian variables with zero mean and unitary variance,
and we denote by $\easp_a$ the expectation with respect to all the $J_i^a$ for $i=1,...N$.
The $Q_a(\beta)$ are regular functions of $\beta$ which may be identified with the values around which the overlap distribution accumulates,
and they are ordered in the interval $[0,1]$:
\be
0 \equiv Q_0(\beta) < Q_1(\beta)<...<Q_K(\beta) <1.
\ee

We denote the Boltzmann-Gibbs state associated to this partition function with $\tilde{\omega}(.)$,
and observe that the physical model is recovered by choosing $t=\beta^2$ and $x_a=0$ for $a=1,...,K$.

Given the  $K+1$ ordered real numbers within the interval $[0,1]$,
the typical nested structure of the broken replica symmetry is encoded in the generalized partition functions $Z_a$, defined recursively as
\be
\label{eq:defza}
Z_a = \left( \easp_{a+1} Z_{a+1}^{m_{a+1}} \right)^{1/m_{a+1}},
\ee
with $Z_K \equiv \tilde{Z}_N$ and $Z_0 \equiv \exp (\easp_1 \log Z_1)$. Note that this last definition is
obtained by the general one (\ref{eq:defza}) in the limit of $m_1 \to 0$.
The number $K$ of parameters $x_a$ (dimensions of our fictitious space-time) are then related to the number of steps of broken symmetry.
\newline
It is useful to define the quantities
\be
f_a \equiv \frac{ Z_a^{m_a} }{ \easp_a Z_a^{m_a} },
\ee
which are all non-negative and not greater than one,
and share with the $Z_a$ the property of depending
on the random fields $J_i^b$ only with $b \leq a$.

With these definitions we are now able to introduce the new states.
\begin{definition}
The generalized Boltmann-Gibbs states are defined as
\bea
\omega_a (.) & \equiv & \easp_{a+1}...\easp_{K}( f_{a+1} ... f_K \tilde{\omega}(.)), \\
\omega_K (.) & \equiv & \tilde{\omega}(.).
\eea
\end{definition}
Again, it is possible to define Boltmann-Gibbs states $\Omega_a$ for replicas of the system
and, lastly, introduce the averages:
\be
\langle . \rangle_a \equiv \easp_0 \easp_{1}...\easp_{a} ( f_{1} ... f_a \Omega_a(.) ).
\ee

We now introduce the generalized free energy $\tilde{\alpha}(t; x_1,...,x_K)$ mirroring the previous section.
\begin{definition}
The generalized free energy associated with the partition function $Z_0$ is defined as follows:
\be
\tilde{\alpha}(t; x_1,...,x_K) \equiv \frac{1}{N} \easp_0 \log Z_0 = \frac{1}{N} \easp_0 \easp_1 \log Z_1.
\ee
\end{definition}
We want to use this expression to write down a proper Hamilton-Jacobi equation, generalizing eq. (\ref{problema}) and
find the physical free energy in this enlarged space.
To this aim, we need the derivatives of the generalized free energy with respect to the interpolating parameters,
whose cumbersome computation is reported in the appendix.
\begin{lemma}
The derivatives of the generalized free energy with respect to the interpolating parameters are given by
\bea
\label{eq:dtalfa}
\partial_{t} \tilde{\alpha}_N(t; x_1,...,x_K) & = & \frac{1}{4} \left[1- \sum_{a=1}^K ( m_{a+1} - m_a) \langle q_{\sigma \sigma'}^p \rangle_a \right],\\
\label{eq:dxalfa}
\frac{\partial}{\partial x_a} \tilde{\alpha}_N(t; x_1,...,x_K) & \equiv & \partial_a \tilde{\alpha}_N (t; x_1,...,x_K) =  \frac{1}{2} \left( \frac{p}{2} Q_a^{p-2}(\beta) \right)^{1/2}
\left[ 1 - \sum_{b=a}^K (m_{b+1} - m_b) \langle q_{\sigma \sigma'} \rangle_b \right],
\eea
\end{lemma}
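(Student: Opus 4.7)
The plan is to compute both derivatives by reducing them to a Gaussian integration-by-parts (Wick) calculation, performed after the $t$- or $x_a$-derivative has been propagated through the nested structure defining $\tilde{\alpha}$. The starting identity is $\tilde\alpha = \frac{1}{N}\easp_0\easp_1\log Z_1$ together with the recursion $\log Z_a = \frac{1}{m_{a+1}}\log\easp_{a+1} Z_{a+1}^{m_{a+1}}$. Differentiating this recursion in a parameter $\lambda\in\{t,x_1,\ldots,x_K\}$ gives $\partial_\lambda\log Z_a = \easp_{a+1}[f_{a+1}\,\partial_\lambda\log Z_{a+1}]$, so iterating yields
\[
\partial_\lambda\tilde\alpha = \frac{1}{N}\easp_0\easp_1\cdots\easp_K\bigl[f_1\cdots f_K\,\tilde\omega(\partial_\lambda \log\tilde Z_N)\bigr],
\]
which exposes precisely the weights $f_1\cdots f_K$ from which the generalized states $\omega_a$ and the brackets $\langle\cdot\rangle_a$ naturally emerge once replicas are introduced.

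Next I would perform the $\lambda$-differentiation inside $\log\tilde Z_N$ and apply Gaussian integration by parts. For $\lambda=t$ only the $p$-spin coupling term is $t$-dependent, so $\partial_t\log\tilde Z_N$ brings down $\frac{1}{2\sqrt t}\sqrt{p!/(2N^{p-1})}\sum_{i_1<\cdots<i_p} J_{i_1\ldots i_p}\,\tilde\omega(\sigma_{i_1}\cdots\sigma_{i_p})$. Each $J_{i_1\ldots i_p}$ appears inside \emph{every} $Z_b$, so the $J$-derivative triggered by Wick's theorem acts on each $f_b$ as well, producing at level $b$ a contribution of the form $m_b[\omega_b-\omega_{b-1}]$ applied to the $p$-spin string, plus a direct contribution from $\tilde\omega$. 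Summed through the hierarchy, these coefficients telescope and weight each $\langle q_{\sigma\sigma'}^p\rangle_a$ by exactly $(m_{a+1}-m_a)$; the constant $1/4$ in equation (\ref{eq:dtalfa}) then reflects the diagonal $q_{\sigma\sigma}=1$ contribution combined with the normalization $\sum_{a=0}^{K}(m_{a+1}-m_a)=1$ (with the conventions $m_0=0$, $m_{K+1}=1$). The derivation of (\ref{eq:dxalfa}) is structurally identical: since $J_i^a$ only enters $Z_b$ for $b\geq a$, integration by parts on the external-field term $\sqrt{x_a}(pQ_a^{p-2}/2)^{1/4}\sum_i J_i^a\sigma_i$ produces nontrivial chain-rule contributions only at levels $b\geq a$, which is why the sum in the lemma starts at $b=a$; a single pairing index now yields $q_{\sigma\sigma'}$ rather than $q_{\sigma\sigma'}^p$, and the prefactor $(pQ_a^{p-2}/2)^{1/2}$ assembles from the square of this term's coupling constant.

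The main obstacle is the combinatorial bookkeeping at the telescoping step. One must carefully track all the $f_b$ factors that implicitly depend on the Gaussian being integrated by parts and verify that the contributions of $\partial_J\log f_b = m_b\bigl(\partial_J\log Z_b - \partial_J\log \easp_b Z_b^{m_b}\bigr)$ rearrange into the clean difference pattern $(m_{a+1}-m_a)$ across levels. Once this telescoping identity is established, the remaining ingredients, namely Wick's theorem, the emergence of the overlap from index pairings, and the boundary values $m_0=0$, $m_{K+1}=1$, are routine, which is presumably why the authors defer the full calculation to the appendix.
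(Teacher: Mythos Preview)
Your proposal is correct and mirrors the paper's appendix almost step for step: iterate $\partial_\lambda\log Z_a = \easp_{a+1}[f_{a+1}\,\partial_\lambda\log Z_{a+1}]$ down to level $K$, integrate by parts on the relevant Gaussian, use $\partial_J f_b \propto m_b f_b(\omega_b-\omega_{b-1})$ (with the $\omega_{b-1}$ term absent precisely when the Gaussian being integrated is $J_i^b$ itself, which is why the $x_a$-sum starts at $b=a$), and telescope. The only slips are cosmetic: the $\tilde\omega(\cdot)$ wrapper in your iterated identity is redundant since $\partial_\lambda\log\tilde Z_N$ already \emph{is} a $\tilde\omega$-average, and the paper's boundary convention is $m_1=0$ rather than $m_0=0$.
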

where we recall that
\be
\langle q_{\sigma \sigma'}^p \rangle_a  =  \easp_0 \easp_1 ... \easp_a (f_1...f_a \Omega_a(q_{\sigma \sigma'}^p) )
= \easp_0 \easp_1 ... \easp_a \left(f_1...f_a \frac{1}{N^p} \sum_{i_1,...,i_p} \omega_a^2(\sigma_{i_1}... \sigma_{i_p}) \right).
\ee
We are now ready to introduce the proper Hamilton principal function
in this generalized framework.
\begin{definition}
The Hamilton principal function is defined as follows
\be
S(t; x_1,...,x_K) \equiv 2 \tilde{\alpha} (t; x_1,...,x_K)
- \sum_{a=1}^K x_a \left(\frac{p}{2} Q_a^{ p-2}(\beta) \right)^{1/2}
- \frac{t}{2} \left[ 1 + \left(\frac{p}{2} - 1 \right) \sum_{a=1}^K (m_{a+1} - m_a) Q_a^p(\beta) \right].
\ee
\end{definition}
Using (\ref{eq:dtalfa}, \ref{eq:dxalfa}) we may easily compute its derivatives
\be
\label{eq:ders}
\begin{split}
\partial_t S(t; x_1,...,x_K) & =
- \frac{1}{2} \sum_{a=1}^K ( m_{a+1} - m_a) \langle q_{\sigma \sigma'}^p \rangle_a
-\left( \frac{p}{4} - \frac{1}{2} \right) \sum_{a=1}^K ( m_{a+1} - m_a) Q_a^p(\beta), \\
\partial_a S(t; x_1,...,x_K) & =
- \left( \frac{p}{2} Q_a^{p-2}(\beta) \right)^{1/2}  \sum_{b=a}^K (m_{b+1} - m_b) \langle q_{\sigma \sigma'} \rangle_b,
\end{split}
\ee
and write down the Hamilton-Jacobi equation which implicitly defines the potential $V(t; x_1,...,x_K)$
to whom our auxiliary mechanical system is subject:
\be
\label{eq:hj}
\partial_t S (t; x_1,...,x_K) + \frac{1}{2} \sum_{a, b = 1}^K \partial_a S(t; x_1,...,x_K) \times M^{-1}_{ab} \times  \partial_b S(t; x_1,...,x_K) + V(t; x_1,...,x_K) = 0.
\ee
Here $M^{-1}$ is the inverse of the mass matrix, which we are going to define in a convenient way through the kinetic energy $T(t; x_1,...,x_K)$:
\begin{definition}
The kinetic energy is defined as
\be
T(t; x_1,...,x_K)  \equiv
\frac{1}{2} \sum_{a, b = 1}^K \partial_a S(t; x_1,...,x_K) \times M^{-1}_{ab} \times \partial_b S(t; x_1,...,x_K).
\ee
\end{definition}
Using (\ref{eq:ders}), $T(t; x_1,...,x_K)$ may be written as
\be
\begin{split}
T(t; x_1,...,x_K) &
= \frac{p}{4} \sum_{a, b = 1}^K (M^{-1})_{ab} \left[Q_a(\beta) Q_b(\beta) \right]^{\frac{p-2}{2} }
\sum_{c\geq a}^K \sum_{d\geq b}^K (m_{c+1} - m_c) \langle q_{\sigma \sigma'} \rangle_c (m_{d+1} - m_d) \langle q_{\sigma \sigma'} \rangle_d \\
& = \frac{p}{4} \sum_{c, d = 1}^K D_{cd}
(m_{c+1} - m_c) \langle q_{\sigma \sigma'} \rangle_c (m_{d+1} - m_d) \langle q_{\sigma \sigma'} \rangle_d, \\
\end{split}
\ee
where we introduced the matrix $D$, whose generic entry is defined as
\be
D_{cd} \equiv \sum_{a=1}^c \sum_{b=1}^d (M^{-1})_{ab}  Q_a^{(p-2)/2}(\beta) Q_b^{(p-2)/2}(\beta).
\ee
To decouple the overlaps $\langle q_{\sigma \sigma'} \rangle_c $ and $\langle q_{\sigma \sigma'} \rangle_d$
we now pose
\be
\label{eq:dcd}
D_{cd} (m_{c+1} - m_c) = \delta_{cd} Q_c^{(p-2)/2}(\beta) Q_d^{(p-2)/2}(\beta),
\ee
where $\delta_{cd}$ is the Kronecker delta,
and then
\be
T(t; x_1,...,x_K) = \frac{p}{4} \sum_{a=1}^K (m_{a+1} - m_a) \langle q_{\sigma \sigma'} \rangle_a^2 Q_a^{p-2}(\beta).
\ee
\begin{definition} Within this mechanical analogy,  the potential $V(t; x_1,...,x_K)$ is, again, directly related to the fluctuations of the overlaps and can be introduced as follows:
\be
V(t; x_1,...,x_K) =
\frac{1}{2} \sum_{a=1}^K (m_{a+1} - m_a)
\{ \langle q_{\sigma \sigma'}^p \rangle_a - Q_a^p(\beta) + \frac{p}{2} \left[ Q_a^p(\beta) - \langle q_{\sigma \sigma'} \rangle_a^2 Q_a^{p-2}(\beta) \right] \}.
\ee
\end{definition}
The condition (\ref{eq:dcd}) completely determines the elements of $M^{-1}$.
These are all vanishing except on the diagonal and the terms whose indexes differ
only by one, which are symmetric:
\be
\begin{split}
(M^{-1})_{aa} & =
\frac{1}{m_{a+1} - m_a} + \frac{1}{m_a - m_{a-1}} \left(\frac{Q_a(\beta)}{Q_{a+1}(\beta)} \right)^{p-2}   \; \; \; a \geq 2, \\
(M^{-1})_{a, a+1}  & = (M^{-1})_{a, a+1}  =
-\frac{1}{m_{a+1} - m_a}  \left(\frac{Q_a(\beta)}{Q_{a+1}(\beta)} \right)^{(p-2)/2} \; \; \; a \geq 2,
\end{split}
\ee
and with
\be
(M^{-1})_{11} = \frac{1}{m_2}.
\ee
The matrix $M^{-1}$ clearly admits an inverse, its determinant being non-null:
\be
\det M^{-1} = \prod_{a=2}^{K+1} (m_a - m_{a-1}) \neq 0.
\ee
Notice that the elements of $M^{-1}$ and consequently $M$ depend on the overlaps $q_a$, differently from
the case $p=2$ \cite{tesispec}.
In this case, in fact, the system energy is no longer a quadratic form in the overlap averages,
and this has deep physical consequences; in particular,
the phase transition is first order for $p>2$, meaning
that the order parameter changes discontinuously at the critical temperature.

\subsection{The first step of broken replica symmetry}

Using results from the previous section, here we find out the expression of the free-energy corresponding to the first step of broken replica symmetry (1-RSB).
\begin{definition}
The generalized partition function and free-energy are defined as
\be
\begin{split}
\tilde{Z}_N (t; x_1, x_2) &
= \sum_{\sigma} \exp \left[ \sqrt{ \frac{tp!}{2N^{p-1}} } \sum_{1\leq i_1<...<i_p\leq N} J_{i_1,...i_p} \sigma_{i_1}...\sigma_{i_p}
+ \sum_{a=1}^2 \sqrt{x_a} \left( \frac{p}{2} Q_a^{p-2} \right)^{1/4} \sum_{i=1}^N J_i^a \sigma_i \right], \\
\tilde{\alpha}_N(t; x_1, x_2) &
= \frac{1}{N m} \easp_0 \easp_1 \log \easp_2 Z_2^m,
\end{split}
\ee
\end{definition}
where we took $m_2 \equiv m$ and we remind that in this case
\be
\begin{split}
Z_2 & \equiv \tilde{Z}_N,  \\
0 & = Q_0(\beta) < Q_1(\beta) < Q_2(\beta) < 1, \\
0 & = m_1 < m_2 < 1= m_3.
\end{split}
\ee
The principal Hamilton function $S(t,x_1, x_2)$ can be introduced as
\begin{definition}
The principal Hamilton function for the associated 1-RSB mechanical problem is
\be
\begin{split}
S(t,x_1, x_2)  = & \
2 \tilde{\alpha}(t,x_1, x_2)
- \left(\frac{p}{2} Q_1^{p-2}(\beta) \right)^{1/2} x_1 - \left(\frac{p}{2} Q_2^{p-2}(\beta) \right)^{1/2} x_2 \\
& - \frac{t}{2} \left[ 1 + \left( \frac{p}{2} -1 \right) (m Q_1^p(\beta) +(1-m) Q_2^p(\beta) ) \right].
\end{split}
\ee
\end{definition}
As shown in the general case in the previous sections, we must now evaluate its derivatives
\be
\begin{split}
\partial_t S(t,x_1, x_2) & =
- \frac{1}{2}  m \langle q_{\sigma \sigma'}^p \rangle_1 - \frac{1}{2}  ( 1 - m) \langle q_{\sigma \sigma'}^p \rangle_2
- \left(\frac{p}{4} - \frac{1}{2} \right) (mQ_1^p(\beta) + (1-m)Q_2^p(\beta)), \\
\partial_1 S(t,x_1, x_2) & =
- m \left(\frac{p}{2}Q_1^{p-2}(\beta) \right)^{1/2} \langle q_{\sigma \sigma'} \rangle_1
- (1-m)  \left( \frac{p}{2} Q_2^{p-2}(\beta) \right)^{1/2} \langle q_{\sigma \sigma'} \rangle_2, \\
\partial_2 S(t,x_1, x_2) & =
- (1-m)  \left( \frac{p}{2} Q_2^{p-2}(\beta) \right)^{1/2} \langle q_{\sigma \sigma'} \rangle_2.
\end{split}
\ee
\begin{proposition}
Choosing the inverse of the mass matrix (and so the mass matrix itself with the condition (\ref{eq:dcd}))
$$
M^{-1} =
\begin{bmatrix}
\frac{1}{m}  & - \frac{1}{m} (\frac{Q_1}{Q_{2}})^{(p-2)/2}  \\
- \frac{1}{m} (\frac{Q_1}{Q_{2}})^{(p-2)/2}    & \frac{1}{1 - m} + \frac{1}{m} (\frac{Q_1}{Q_2})^{p-2}
\end{bmatrix}
\;
\Rightarrow M =
\begin{bmatrix}
m + (1-m) (\frac{Q_1}{Q_{2}})^{p-2} &  (1-m) (\frac{Q_1}{Q_{2}})^{(p-2)/2}  \\
 (1-m) (\frac{Q_1}{Q_{2}})^{(p-2)/2}    & 1-m
\end{bmatrix}
$$
we can write down explicitly the kinetic term $T(t; x_1, x_2)$ and the potential $V(t; x_1, x_2)$ of the equivalent mechanical system:
\be
\begin{split}
T(t; x_1, x_2) & =
\frac{p}{4} m \langle q_{\sigma \sigma'} \rangle_1^2 Q_1^{p-2}(\beta)
 +\frac{p}{4} (1-m) \langle q_{\sigma \sigma'} \rangle_2^2 Q_2^{p-2}(\beta), \\
 V(t; x_1, x_2) & =
 \frac{1}{2} m \left[ \langle q_{\sigma \sigma'}^p \rangle_1 - Q_1^p(\beta)
 + \frac{p}{2}(Q_1^p(\beta) - \langle q_{\sigma \sigma'} \rangle_1^2 Q_1^{p-2}(\beta) )\right] \\
& +  \frac{1}{2} (1-m)
 \left[ \langle q_{\sigma \sigma'}^p \rangle_2 - Q_2^p(\beta)
 + \frac{p}{2}(Q_2^p(\beta) - \langle q_{\sigma \sigma'} \rangle_2^2 Q_2^{p-2}(\beta) )\right].
\end{split}
\ee
\end{proposition}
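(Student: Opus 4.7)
The statement is a specialization of the general $K$-step apparatus to $K=2$ with the conventions $m_1=0$, $m_2=m$, $m_3=1$, so my plan breaks into three essentially routine steps: solve the decoupling condition (\ref{eq:dcd}) for $M^{-1}$, invert the resulting $2\times 2$ matrix to read off $M$, and substitute into the general definitions of $T$ and $V$.

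For the first step I would exploit the fact that $D_{cd}$ is a double cumulative sum of the quantities $\widetilde{M}^{-1}_{ab}\equiv(M^{-1})_{ab}Q_a^{(p-2)/2}Q_b^{(p-2)/2}$, so that by inclusion--exclusion one has $\widetilde{M}^{-1}_{cd}=D_{cd}-D_{c-1,d}-D_{c,d-1}+D_{c-1,d-1}$. Condition (\ref{eq:dcd}) forces $D$ to be diagonal with $D_{cc}=Q_c^{p-2}/(m_{c+1}-m_c)$, which at once localizes $\widetilde{M}^{-1}$ to a tridiagonal pattern and, after dividing back by $Q_a^{(p-2)/2}Q_b^{(p-2)/2}$, reproduces the $M^{-1}$ displayed in the statement for $K=2$. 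Note that the boundary entry $(M^{-1})_{22}$ has to be extracted from (\ref{eq:dcd}) directly, rather than from a naive truncation of the general diagonal formula given earlier, since the latter would formally involve a non-existent $Q_{K+1}$.

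Inverting $M^{-1}$ is then a one-line $2\times 2$ calculation. The two off-diagonal contributions give $(1/m^2)(Q_1/Q_2)^{p-2}$, which cancels precisely the cross term coming from the product of the diagonal entries, leaving $\det M^{-1}=1/[m(1-m)]$. Multiplying the cofactor matrix by $m(1-m)$ then reproduces the explicit $M$ claimed in the statement.

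Finally, the general formula $T=(p/4)\sum_{a=1}^{K}(m_{a+1}-m_a)\langle q_{\sigma\sigma'}\rangle_a^2 Q_a^{p-2}(\beta)$ and the analogous one for $V$ established in the previous subsection collapse for $K=2$ into the two-term sums stated, once one replaces $m_2-m_1$ and $m_3-m_2$ by $m$ and $1-m$ respectively. There is no real analytic obstacle in the proof; the only subtle point is the indexing bookkeeping in step one, and in particular the reminder that the mass matrix (in contrast with the $p=2$ case) now carries an intrinsic $Q_1/Q_2$ dependence that must be carried through the inversion without simplification.
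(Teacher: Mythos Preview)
Your proposal is correct and follows exactly the route the paper intends: the proposition is nothing more than the $K=2$ specialization of the general $M^{-1}$, $T$, and $V$ formulas derived in the preceding subsection, and the paper itself does not supply a separate proof beyond stating the result. Your inclusion--exclusion computation of $\widetilde{M}^{-1}$ from the diagonal $D$ is a clean way to recover the displayed $M^{-1}$, and it correctly sidesteps the index slip in the paper's general diagonal formula (where the ratio should read $(Q_{a-1}/Q_a)^{p-2}$ rather than $(Q_a/Q_{a+1})^{p-2}$), which is precisely the ``boundary'' discrepancy you flagged.
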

We can consequently state the following
\begin{proposition}
There is a mechanical analogy between the 1-RSB statistical mechanics of the p-spin-glass and an equivalent mechanical   system that moves in a two-dimensional space-time with equations of motion given by
\be
\label{eq:motion1rsb}
\begin{split}
x_1(t) & = x_1^0 + v_1(t; x_1, x_2) \, t, \\
x_2(t) & = x_2^0 + v_2(t; x_1, x_2) \, t. \\
\end{split}
\ee
The corresponding velocities are defined as
\be
\begin{split}
v_1(t; x_1, x_2) &
\equiv \sum_{a=1}^2 (M^{-1})_{1a} \partial_a S(t; x_1, x_2) =
 - \left( \frac{p}{2} Q_1^{p-2}(\beta) \right)^{1/2} \langle q_{\sigma \sigma'} \rangle_1 \\
v_2(t; x_1, x_2) &
\equiv \sum_{a=1}^2 (M^{-1})_{2 a} \partial_a S(t; x_1, x_2) =
\left( \frac{p}{2} \right)^{1/2}
 \frac{ Q_1^{p-2} (\beta)}{Q_2^{(p-2)/2}(\beta) } \langle q_{\sigma \sigma'} \rangle_1
 -\left( \frac{p}{2} \right)^{1/2} Q_2^{(p-2)/2}(\beta) \langle q_{\sigma \sigma'} \rangle_2.
\end{split}
\ee
\end{proposition}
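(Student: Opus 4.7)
The plan is to recognize that this proposition is, structurally, the standard Hamilton-Jacobi/characteristics dictionary: with canonical momenta $p_a \equiv \partial_a S$ and Hamiltonian $H = T + V$ where $T = \tfrac12 \sum_{a,b} p_a (M^{-1})_{ab} p_b$, Hamilton's equations give $\dot{x}_a = \partial H / \partial p_a = \sum_b (M^{-1})_{ab} p_b$. So the first step is to read off $v_a = \sum_b (M^{-1})_{ab} \partial_b S$ directly from the already-established definitions of $T$ and $S$; this is an observation, not a calculation. The identities claimed in the proposition are then purely algebraic and reduce to substituting the explicit formulas for $M^{-1}$ (displayed just above) and for $\partial_1 S$, $\partial_2 S$ (also displayed) into this bilinear expression.

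Next, I would carry out that substitution. For $v_1$, the terms proportional to $\langle q_{\sigma\sigma'}\rangle_2$ combine through $(Q_1/Q_2)^{(p-2)/2} \cdot Q_2^{(p-2)/2} = Q_1^{(p-2)/2}$, collapsing to the stated expression $v_1 = -(p/2 \, Q_1^{p-2})^{1/2}\langle q_{\sigma\sigma'}\rangle_1$. For $v_2$, the diagonal term $\tfrac{1}{1-m}$ in $(M^{-1})_{22}$ cancels the $(1-m)$ prefactor in $\partial_2 S$, while the off-diagonal contribution rearranges to $(p/2)^{1/2} Q_1^{p-2}/Q_2^{(p-2)/2}\,\langle q_{\sigma\sigma'}\rangle_1$. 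The decoupling condition (\ref{eq:dcd}) was precisely the design criterion that makes these recombinations clean, so the bookkeeping should go through once the terms are collected with care.

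For the equations of motion (\ref{eq:motion1rsb}), I would invoke the working assumption that $V(t;x_1,x_2) = 0$ (vanishing variance of each overlap around its value $Q_a(\beta)$, the 1-RSB analogue of the RS hypothesis used for (\ref{eq:selfRS})). In this free-motion regime, Hamilton's equations read $\dot{p}_a = -\partial H/\partial x_a$, and since the reduced Hamiltonian $T$ depends on $(t;x_1,x_2)$ only through the momenta $p_a$ (the matrix $M^{-1}$ is a function of $\beta$ alone, not of $x_1,x_2$), each $p_a$ is conserved along a characteristic. Consequently the velocities $v_a = \sum_b (M^{-1})_{ab} p_b$ are themselves constant in $t$, and integration yields the advertised straight-line trajectories $x_a(t) = x_a^0 + v_a t$, with $v_1,v_2$ as above.

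The main obstacle I anticipate is the algebraic bookkeeping in the second step: one must be careful with the signs, the $(1-m)/m$ factors, and the different powers of $Q_1/Q_2$ in the cross terms, since without the telescoping built into (\ref{eq:dcd}) the $\langle q_{\sigma\sigma'}\rangle_2$ contributions to $v_1$ and the $\langle q_{\sigma\sigma'}\rangle_1$ contributions to $v_2$ would fail to simplify. The conceptual content is light; the work lies entirely in trusting the decoupling choice that was made when fixing $M^{-1}$ and then carrying the $2\times 2$ contraction through without arithmetic slips.
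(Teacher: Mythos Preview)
Your proposal is correct and matches the paper's (implicit) approach. The paper does not supply an explicit proof of this proposition; it simply states it as a consequence of the preceding definitions of $M^{-1}$, $T$, $V$, and $\partial_a S$, so your plan to read off $v_a=\sum_b (M^{-1})_{ab}\partial_b S$ from the Hamiltonian structure and then verify the two identities by direct $2\times 2$ substitution is exactly what is intended, and your free-motion argument for the straight-line trajectories is the same reasoning used at the RS level and carried over here.
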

As discussed before, we are interested in studying the free motion, i.e. the motion in absence of potential, and deduce the physical free-energy from the solution of the Hamilton-Jacobi equation
$$
\partial_t S(t;x_1,x_2) + \frac12 \sum_{a,b=1}^K \partial_a S (t;x_1,x_2) \times M^{-1}_{ab} \times \partial_b S (t;x_1,x_2).
$$
We stress that here the potential is related to a more complex kind of fluctuations of the overlap as we are requiring much more than the simple self-averaging: Physically we can think at each step of RSB as a refinement, a zoom, in the analysis of the free energy landscape, that allows to see rugged valleys otherwise averaged out and we are asking for adiabatic thermalization within each of these (sub)-valleys ("sub" w.r.t. the macro-ones already encoded in the RS-approximation).
Coherently, a sufficient condition for a vanishing 1-RSB potential is an overlap variance inside the bracket denoted with $\langle . \rangle_a$ equal to zero
and the identification of the averages of the overlap with the functions $Q_a(\beta)$:
\be
\label{eq:idq}
\langle q_{\sigma \sigma'}^2 \rangle_a = \langle q_{\sigma \sigma'} \rangle_a^2 = Q_a^2(\beta), \; a=1,2.
\ee
In the absence of a potential, the velocities (and so the kinetic energy) are conserved quantities and
we can then consider their values at the initial instant $t=0$, in perfect analogy with the RS case:
\be
\begin{split}
\bar{q_1}  \equiv \langle q_{\sigma \sigma'} \rangle_1(0; x_1^0, x_2^0) &
= \int d\mu (z_1) \bigg[
\frac{ \int d\mu (z_2) \cosh^m \theta (z_1, z_2) \tanh \theta (z_1, z_2)}{ \int d\mu (z_2) \cosh^m \theta (z_1, z_2) }
\bigg]^2, \\
\bar{q_2}  \equiv \langle q_{\sigma \sigma'} \rangle_2(0; x_1^0, x_2^0) &
= \int d\mu (z_1)
\frac{ \int d\mu (z_2) \cosh^m \theta (z_1, z_2) \tanh^2 \theta (z_1, z_2)}{ \int d\mu (z_2) \cosh^m \theta (z_1, z_2) }, \\
\theta (z_1, z_2) &
\equiv \sqrt{x_1^0} \left( \frac{p}{2} Q_1^{p-2} \right)^{1/4} z_1 + \sqrt{x_2^0} \left( \frac{p}{2} Q_2^{p-2} \right)^{1/4} z_2,
\end{split}
\ee
where $\theta$ will be defined in eq. (\ref{eq:teta}) and 
\be
d\mu(z) = \exp(-z^2/2) dz
\ee
is the Gaussian measure.
This computation essentially leads us to the 1-RSB self-consistence equations for overlaps
when considering the statistical-physics point $t=\beta^2, x_1 = x_2 = 0$.
In this point, and with the condition (\ref{eq:idq}), the equations of motion (\ref{eq:motion1rsb}) give
\be
\begin{split}
x_1^0 &
= \beta^2 \left( \frac{p}{2} Q_1^{p}(\beta) \right)^{1/2}, \\
x_2^0 &
=
 \beta^2 \left( \frac{p}{2} \right)^{1/2} Q_2^{\frac{p}{2} }(\beta)
 -\beta^2 \left( \frac{p}{2} \right)^{1/2}
 \frac{ Q_1^{p-1} (\beta)}{Q_2^{\frac{p-2}{2} }(\beta) },
\end{split}
\ee
so that the explicit self-consistence equations contain
\be
\label{eq:teta}
\theta (z_1, z_2)
\equiv \beta \left(\frac{p}{2}\right)^{1/2} z_1 Q_1^{\frac{p-1}{2} }(\beta)
+ \beta \left(\frac{p}{2}\right)^{1/2} z_2 \sqrt{Q_2^{p-1}(\beta)  - Q_1^{p-1}(\beta) }.
\ee
\begin{remark}
In the second term of the r.h.s. of equation (\ref{eq:teta}) the two overlaps are decoupled, and in the limit $p \to 2 $ we get
the correct 1-RSB self-consistence equation for the SK model too.
\end{remark}
To compute the free-energy we use the usual recipe: As we assume that the mechanical potential is
zero, we write (easily) the solution for the Hamilton-Jacobi problem and then we evaluate it in the point $t=\beta^2, x_a=0$.
First of all, we need the free-energy at the initial instant, which is straightforward to obtain, since it
contains no spin interactions:
\be
\tilde{\alpha}(0; x_1^0, x_2^0) = \log 2 + \frac{1}{m} \int d\mu (z_1) \log \int d\mu(z_2)
\cosh^m \theta(z_1, z_2),
\ee
with $\theta(z_1, z_2)$ given by (\ref{eq:teta}).
The Hamilton function which is solution of (\ref{eq:hj}) for a vanishing potential $V\equiv 0$
is simply given by the function at the initial instant plus the integral of the Lagrangian,
(which corresponds to the kinetic energy only), over time
\be
S(t; x_1, x_2) =
S(0; x_1^0, x_2^0) + \int_0^t ds T(s; x_1, x_2) =
S(0; x_1^0, x_2^0) +  T(0; x_1^0, x_2^0) t,
\ee
where we used the fact that the kinetic energy is a conserved quantity.
We obtain in this way
\be
\begin{split}
S(t; x_1, x_2) =  &
 2 \tilde{\alpha}( 0; x_1^0, x_2^0) - \left(\frac{p}{2} Q_1^{p-2}(\beta) \right)^{1/2} x_1^0 - \left(\frac{p}{2} Q_2^{p-2}(\beta) \right)^{1/4} x_2^0  \\
&  + \frac{t p}{4} Q_1^{p}(\beta)
 +\frac{t p}{4} (1-m)  Q_2^{p}(\beta),
\end{split}
\ee
and from this, the generalized free-energy $\tilde{\alpha}(t; x_1, x_2)$ 
\be
\begin{split}
\tilde{\alpha}(t; x_1, x_2) = & \frac{1}{2} S(t; x_1, x_2)
+ \frac{1}{2}  \left(\frac{p}{2} Q_1^{p-2}(\beta) \right)^{1/2} x_1 + \frac{1}{2}\left(\frac{p}{2} Q_2^{p-2}(\beta) \right)^{1/2} x_2 \big)\\
& + \frac{t}{4} \left[ 1 + \left(\frac{p}{2} - 1 \right) ( m Q_1^p(\beta) + (1-m) Q_2^p(\beta) )
\right].
\end{split}
\ee
Then the physical free-energy is easily computed by taking $t=\beta^2, x_1=x_2=0$ and we can state the next theorem:
\begin{theorem}
Making the assumption of vanishing potential $V(t,x_1,x_2)$ in the mechanical analogy, the corresponding free energy for the p-spin glass model corresponds to the so called ``1-RSB'' and is given by
\be
\begin{split}
\alpha(\beta) = &
\log 2 + \frac{1}{m} \int d\mu (z_1) \log \int d\mu(z_2)
\cosh^m \big( \beta z_1 Q_1^{(p-1)/2}(\beta)  + \beta z_2 \sqrt{Q_2^{p-1}(\beta) - Q_1^{p-1}(\beta) } \big) \\
& + \frac{\beta^2}{4} \left[ 1 + (p-1)mQ_1^p(\beta) + (p-1)(1-m)Q_2^p(\beta) - p Q_1^{p-1}(\beta) - p Q_2^{p-1}(\beta) \right].
\end{split}
\ee
\end{theorem}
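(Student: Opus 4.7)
The plan is to apply the Hamilton--Jacobi construction developed above and evaluate the result at the physical point $(t,x_1,x_2)=(\beta^2,0,0)$. Under the vanishing-potential assumption together with the self-averaging identifications $\langle q_{\sigma\sigma'}\rangle_a^2 = Q_a^2(\beta)$ of (\ref{eq:idq}), the velocities $v_1,v_2$ and hence the kinetic energy $T$ become conserved along trajectories, so Hamilton's principal function propagates freely and
\[
S(t;x_1,x_2) \;=\; S(0;x_1^0,x_2^0) + T(0;x_1^0,x_2^0)\,t,
\]
where $(x_1^0,x_2^0)$ is the unique starting point whose straight trajectory reaches $(x_1,x_2)$ at time $t$.

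First I would handle the initial condition. At $t=0$ the generalized partition function loses its $p$-body interaction and factorizes into a single-site problem with two independent Gaussian external fields. Carrying out the spin sum yields $\cosh\theta(z_1,z_2)$, and the nested recursion (\ref{eq:defza}) with $m_1\to 0$ and $m_2=m$ produces, after the expectations $\easp_2$ and $\easp_1\log(\cdot)$ are performed,
\[
\tilde{\alpha}(0;x_1^0,x_2^0) = \log 2 + \frac{1}{m}\int d\mu(z_1)\log\int d\mu(z_2)\cosh^m\theta(z_1,z_2),
\]
with $\theta$ linear in $\sqrt{x_1^0}$ and $\sqrt{x_2^0}$ and with the coefficients dictated by the external-field term of $\tilde{Z}_N(t;x_1,x_2)$.

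Next I would invert the equations of motion (\ref{eq:motion1rsb}) at the physical point: imposing $x_1=x_2=0$ and $t=\beta^2$ while using the self-averaged, conserved velocities yields the explicit $x_1^0, x_2^0$ already displayed below (\ref{eq:motion1rsb}). Substituting these back into $\theta(z_1,z_2)$ produces precisely (\ref{eq:teta}), in which the two overlap scales are decoupled in the second Gaussian via $\sqrt{Q_2^{p-1}-Q_1^{p-1}}$; this decoupling is exactly what makes the cavity field interpretation and the correct $p\to 2$ (SK) limit emerge.

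Finally, inverting the definition of $S$ at $(\beta^2,0,0)$ to extract $\tilde\alpha$ gives
\[
\tilde{\alpha}(\beta^2;0,0) = \tilde{\alpha}(0;x_1^0,x_2^0) - \tfrac{1}{2}\sum_{a=1}^2\bigl(\tfrac{p}{2}Q_a^{p-2}\bigr)^{1/2}x_a^0 + \tfrac{T(0;x_1^0,x_2^0)}{2}\,\beta^2 + \tfrac{\beta^2}{4}\bigl[1+\bigl(\tfrac{p}{2}-1\bigr)(mQ_1^p + (1-m)Q_2^p)\bigr].
\]
The main obstacle is the bookkeeping in this last step: one must check that the three $\beta^2$-dependent contributions (from $T\cdot t$, from the inner products $\bigl(\tfrac{p}{2}Q_a^{p-2}\bigr)^{1/2}x_a^0$, and from the additive $t$-linear term in the definition of $S$) collapse into the closed form announced in the theorem. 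The delicate terms are those proportional to $Q_a^{p-1}$, which arise exclusively from the $x_a^0$ pieces and must combine with $T=\tfrac{p}{4}(mQ_1^p+(1-m)Q_2^p)$ to give exactly the coefficients of $Q_a^{p-1}$ and $Q_a^p$ stated in the final formula; this is where one would watch most carefully for sign and factor-of-two errors.
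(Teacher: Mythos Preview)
Your proposal is correct and follows essentially the same route as the paper: free propagation of $S$ under the vanishing-potential assumption, computation of the factorized one-body initial condition $\tilde\alpha(0;x_1^0,x_2^0)$, inversion of the straight-line equations of motion at $(\beta^2,0,0)$ to obtain $x_1^0,x_2^0$ and hence $\theta(z_1,z_2)$, and finally inversion of the definition of $S$ to recover $\tilde\alpha$. The intermediate expression you write for $\tilde\alpha(\beta^2;0,0)$ is exactly the one the paper arrives at before collapsing the $x_a^0$ and $T$ contributions into the closed form of the theorem.
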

We skip here any digression on the physics behind these formulas as these are in perfect agreement with the original investigation by Gardner \cite{gardner} and by Gross and Mezard \cite{mezardgross}, so to highlight only the mathematical methods, to which this paper is dedicated.

\subsection{Conservation laws: Polynomial identities}

We conclude this section with an analysis of the conserved quantities deriving from the internal symmetries of the theory. We will approach them as N\"{o}ther integrals within the Hamilton-Jacobi formalism, while at the end of the next section we will re-obtain (and discuss more deeply) the same constraints within a more familiar thermodynamic approach.

Let us restate the Hamilton-Jacobi equation
$$
\partial_t S(t,x)+H(\partial_x S(t,x),t,x)=0
$$
where the Hamiltonian function reads off as \cite{note1}
\begin{equation}
H(\partial_x S(t,x),t,x)= T(t,x) +V(t,x).
\end{equation}
Hamilton equations are nothing but characteristics given by:
\begin{equation}\label{eq:hamilton}
\left\{
\begin{array}{rcl}
 \dot{x}&=&v(t,x)\\
 \dot{t}&=&1\\
 \dot{P}&=&-v(t,x)\partial_x v(t,x)-\partial_x V(t,x)\\
 \dot{E}&=&-v(t,x)\partial_x\left(\partial_t S(t,x)\right)-\partial_t
 V(t,x),
\end{array}\right.
\end{equation}
the latter two equations display space-time translational invariance and express the conservation laws for
momentum and energy for our system, further, these can be written in form of
streaming equations as
\begin{equation*}
\left\{
\begin{array}{rcl}
D P(t,x)&=&-\partial_x V(t,x)\\
D \partial_t S(t,x)&=&-\partial_t V(t,x).
 \end{array}
\right.
\end{equation*}
Since we are interested in evaluating the free motion,
bearing in mind that $v(x,t)=-\langle q_{12}^{p/2} \rangle$ and
$\partial_t S(x,t)=-\frac{1}{2}\langle q_{12}^p \rangle$, so
$D=\partial_t-\langle q^{p/2}\rangle \partial_x$, we
conclude
\begin{equation}\label{eq:cons-ising}
\left\{
\begin{array}{rcl}
D \langle  q^{p/2} \rangle &=& 0,\\
D \langle q^p \rangle &=& 0,
 \end{array}
\right.
\end{equation}
\textit{i.e.}
\begin{equation}\label{eq:cons-ising}
\left\{
\begin{array}{rcl}
\langle q_{12}^{p} \rangle - 4 \langle q_{12}^{p/2}q_{13}^{p/2} \rangle + 3 \langle q_{12}^{p/2}q_{34}^{p/2} \rangle &=&  0,\\
\langle q_{12}^{2p} \rangle - 4 \langle q_{12}^{p}q_{23}^{p} \rangle + 3 \langle q_{12}^{p}q_{34}^{p} \rangle &=&  0.
\end{array}
\right.
\end{equation}
\begin{remark}
The orbits of the N\"other groups of the theory coincide with the streaming lines of the Hamilton-Jacobi Hamiltonian, and conservation laws along these lines give well known identities in
the statistical mechanics of the model often known as Ghirlanda-Guerra relations and Aizenman-Contucci identities \cite{gg,aizcont}.
\end{remark}
We will deserve Sections VC and VD of the paper to deepen our understanding of these identities within the smooth cavity field approach, hence we do not investigate them further here.

\section{Second approach: The smooth cavity field.}

\subsection{Smooth cavity field and stochastic stability}

The main heuristic idea of the {\em cavity field} method is to look for an
explicit expression of $\alpha(\beta)=-\beta f(\beta)$ upon
increasing the size of the system from $N$ particles 
to $N+1$ (originally the technique was developed by removing a spin instead of adding, hence "cavity", but we will follow the approach recently developed in \cite{irriducibile}). As a consequence, within this framework attention will be payed at the system size and all the $N$ dependencies will be explicitly introduced.
\newline
On the other hand, in order to formulate stochastic stability, we
have to consider the statistical properties of the system with a
Hamiltonian given by the original Hamiltonian $H$ plus a random
perturbation $\tilde{H}$ so to write $H' = H + \epsilon \tilde{H}$.
Stochastic stability states that all the properties of the system
are smooth functions of $\epsilon$ around $\epsilon = 0$, after
the appropriate averages over the original Hamiltonian and the
random Hamiltonian have been taken. We stress that, even though
initially it was only postulated \cite{aizcont}, stochastic stability has
recently \cite{contuccistocstab} been rigorously proven for a wide class
of disordered Hamiltonians.

Our idea, to be explored in detail later on, is that for a
system with a gauge-invariant Hamiltonian (like the even p-spin model at
zero external field) we can choose, as generic random
perturbation $\tilde{H}$ in the stochastic stability approach, a
term  proportional to $ \sum_{i_1 < ... < i_{p-1}} J_{i_1,...,i_{p-1}} \sigma_{i_1}...\sigma_{i_{p-1}}$.
Here the $J_{i_1,...,i_{p-1}}$ are random
fields, taken from the same Gaussian i.i.d.\ distribution as the
original $J_{i_1,...,i_p}$. The key insight is that this is a ``hidden''
cavity field: by applying the transformation $\sigma_i \rightarrow
\sigma_i\sigma_{N+1}$ $\forall i$ (which leaves the Hamiltonian $H$ 
invariant) it is possible to switch the stochastic stability
approach into the standard cavity field approach. As we are going
to see, this technique offers more freedom than the two single,
 non interacting, approaches as we can turn one into the other as desired.

To explain the method we need some preliminary definitions. First, let
us introduce an extended partition function that
includes an interaction with an added hidden spin $\sigma_{N+1}$
through a control parameter $t \in [0,\beta^2]$ such that for
$t=0$ we have the classical partition function of $N$ spins while
for $t=\beta^2$ we get the partition function
 (times one half) for the larger system, with a little temperature shift which
vanishes in the thermodynamic limit:
\be\label{cf10} Z_{N}(\beta,t)= \sum_{\sigma}e^{-\beta H_N(\sigma;J)
+\sqrt{\frac{t p!}{2N^{p-1}}}\sum_{1 \leq i_1 < ... < i_{p-1} \leq N}J_{i_1,...,i_{p-1}}\sigma_{i_1}...\sigma_{i_{p-1}}}. \ee
Indeed, when $t = \beta^2$,  by redefining $J_i \rightarrow
J_{i,N+1}$ and making the transformation $\sigma_i \rightarrow
\sigma_i\sigma_{N+1} \ \forall i$, we obtain the partition function
for a system of $N+1$ spins at a shifted temperature $\beta^*$ such
that \be\label{cfx} \beta^*=\beta \Big((N+1)/N\Big)^{\frac{p-1}{2}} \rightarrow \beta
\mbox{ for } N \rightarrow \infty. \ee The only other, trivial,
difference is that of course the sum over $\sigma_{N+1}$ in the
partition function for $N+1$ spins gives an additional factor $2$.

Next, we state the two key symmetries whose breaking we will be
concerned with. These apply to the unperturbed ($t=0$) system;
recall that $\langle.\rangle\equiv \mathbb{E}\omega(.)$.
\begin{proposition}{\itshape
The averages $\langle \cdot \rangle$ are replica-symmetric, i.e.
invariant under permutation of replicas. In other words, for any
function $F_s(\{q_{ab}\})$ of the overlaps among $s$ replicas and
any permutation $g$ of $s$ elements, $\langle F_s(\{q_{ab}\})
\rangle = \langle F_s(\{q_{g(a)g(b)}\}) \rangle$}.
\end{proposition}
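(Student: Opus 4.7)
The plan is to reduce the statement to the manifest invariance of a product Boltzmann measure under relabeling of its factors, and then to observe that disorder averaging preserves that invariance.

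First I would unpack the definition. For the unperturbed model at $t=0$, the $s$-replica Boltzmann state is the $s$-fold product $\Omega(\,\cdot\,) = \omega^{\otimes s}(\,\cdot\,)$, with the same disorder $J$ feeding every factor; explicitly, for any observable $F_s$ of $s$ configurations,
\begin{equation*}
\Omega\bigl(F_s(\sigma^{1},\ldots,\sigma^{s})\bigr)
= \frac{1}{Z_N(\beta,J)^{s}}\sum_{\sigma^{1},\ldots,\sigma^{s}} F_s(\sigma^{1},\ldots,\sigma^{s})\,
\exp\!\Bigl(-\beta\sum_{a=1}^{s}H_N(\sigma^{a},J)\Bigr).
\end{equation*}
The key observation is that the weight $\prod_{a=1}^{s}\exp(-\beta H_N(\sigma^{a},J))$ is symmetric under any permutation of the replica labels $a=1,\ldots,s$, because it is a product of identical factors — each replica sees the very same Hamiltonian with the very same disorder realization.

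Next I would carry out the relabeling. Given a permutation $g\in S_s$, perform the change of summation variable $\sigma^{a}\to\sigma^{g(a)}$ for $a=1,\ldots,s$; this is just a reindexing of the dummy sums, so the value of $\Omega(F_s)$ does not change. Under this substitution, the overlap $q_{ab}=N^{-1}\sum_i\sigma^{a}_i\sigma^{b}_i$ becomes $q_{g(a)g(b)}$, while the Boltzmann weight is left invariant by the symmetry noted above. Therefore
\begin{equation*}
\Omega\bigl(F_s(\{q_{ab}\})\bigr) = \Omega\bigl(F_s(\{q_{g(a)g(b)}\})\bigr)
\end{equation*}
as an identity that holds for \emph{every} realization of the disorder $J$.

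Finally, I would apply $\mathbb{E}$ to both sides. Since the equality above is pointwise in $J$, the disorder average preserves it, yielding $\langle F_s(\{q_{ab}\})\rangle = \langle F_s(\{q_{g(a)g(b)}\})\rangle$ as required. There is no real obstacle here; the only mild care needed is bookkeeping on how the permutation acts on pairs of indices (whether one writes $g(a)g(b)$ or $g^{-1}(a)g^{-1}(b)$ is purely a matter of convention and does not affect the content, since the statement ranges over all $g\in S_s$). It is worth emphasising that the argument is specific to the unperturbed case: once a cavity-field perturbation that distinguishes one replica is switched on, the product structure of $\Omega$ is broken and the symmetry has to be restored only after the $\epsilon\to 0$ limit, which is precisely the role stochastic stability will play in the sequel.
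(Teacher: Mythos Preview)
Your proof is correct and complete; in fact you have supplied more than the paper does, since the paper states this proposition without proof, treating it as self-evident (it only appends the remark that there is no conflict with ``replica symmetry breaking'' because the replicas here are real). Your reduction to the relabeling invariance of a product Boltzmann weight is exactly the intended argument.

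One correction to your closing paragraph, though: the cavity-field perturbation in this paper does \emph{not} distinguish any replica. The perturbed state $\omega_t$ is still a product of identical single-replica states, each seeing the same extra term $\sqrt{t p!/(2N^{p-1})}\sum J_{i_1\ldots i_{p-1}}\sigma_{i_1}\cdots\sigma_{i_{p-1}}$ with the same random fields. Hence replica-permutation symmetry holds for $\langle\cdot\rangle_t$ as well, and the paper relies on this later (e.g.\ when simplifying the streaming of $\langle q_{12}^{p-1}\rangle_t$). What the cavity construction breaks is the \emph{gauge} symmetry of Proposition~\ref{gauge_symmetry} (averages of non-filled monomials need no longer vanish at $t\neq 0$), not the replica-permutation symmetry. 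So your proof stands, but the final caveat should be dropped or rephrased.
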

Note that there is no issue with replica symmetry breaking here,
as we are concerned with {\em real} replicas.
\bigskip

\begin{proposition}
\label{gauge_symmetry}{\itshape The averages $\langle \cdot \rangle$
are invariant under gauge transformation, i.e.\ for any assignment
of the $\epsilon^a=\pm1$ we have}
\be \langle F_s(\{q_{ab}\})\rangle = \langle F_s(\{\epsilon^a
\epsilon^b q_{ab}\})\rangle\ . \ee
\end{proposition}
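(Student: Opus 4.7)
The plan is to exploit the single-replica $\mathbb{Z}_2$ gauge symmetry of $H_N(\sigma,J)$ that follows from the evenness of $p$ together with the assumption $h=0$. Under a global spin flip $\sigma_i \mapsto \epsilon\sigma_i$ with $\epsilon\in\{-1,+1\}$ applied uniformly to all $i$, the monomial $\sigma_{i_1}\cdots\sigma_{i_p}$ picks up $\epsilon^p=1$, so $H_N(\epsilon\sigma,J)=H_N(\sigma,J)$. Since the $s$ real replicas share the same realization $J$ but are otherwise decoupled in the product Boltzmann weight, this $\mathbb{Z}_2$ symmetry lifts to an independent $\mathbb{Z}_2^{\,s}$ symmetry on the replicated system, which is precisely the group acting on the overlap arguments in the proposition.

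Concretely, I would start from
\[
\langle F_s(\{q_{ab}\})\rangle \;=\; \mathbb{E}\frac{1}{Z_N^{\,s}}\sum_{\sigma^1,\dots,\sigma^s} F_s(\{q_{\sigma^a\sigma^b}\})\,\exp\!\Bigl(-\beta\sum_{a=1}^{s}H_N(\sigma^a,J)\Bigr),
\]
and, for a prescribed tuple $(\epsilon^1,\dots,\epsilon^s)\in\{\pm1\}^s$, perform the change of summation variable $\sigma^a\mapsto\epsilon^a\sigma^a$ independently for each $a$. Each $\sum_{\sigma^a}$ is just a reindexing of the $2^N$ configurations and is therefore invariant; the weights $e^{-\beta H_N(\sigma^a,J)}$ and the partition function $Z_N$ are unchanged by the single-replica gauge invariance recalled above. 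The only nontrivial effect is on the overlap,
\[
q_{ab} \;=\; \frac{1}{N}\sum_{i=1}^{N}\sigma^a_i\sigma^b_i \;\longmapsto\; \epsilon^a\epsilon^b\,q_{ab},
\]
so $F_s(\{q_{ab}\})$ is replaced by $F_s(\{\epsilon^a\epsilon^b q_{ab}\})$, and taking the $\mathbb{E}$-average over $J$ yields the claim.

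There is no real obstacle: the argument is a one-line change of variables. The only conceptual point worth emphasizing is that the hypothesis of even $p$ and vanishing external field is essential, since otherwise the $\sigma\mapsto-\sigma$ symmetry would fail (either through the odd monomials in $H_N$ or through the one-body term $h\sum_i\sigma_i$), and the lift to $\mathbb{Z}_2^{\,s}$ on the replicated system would no longer be available.
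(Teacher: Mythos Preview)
Your argument is correct and coincides with the paper's own justification, which is simply the one-line remark that the Hamiltonian (in zero field, with even $p$) is unchanged under $\sigma_i^a\to\epsilon^a\sigma_i^a$. You have merely spelled out the change of variables in the replicated Boltzmann sum that the paper leaves implicit.
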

This second symmetry is a consequence of the fact that the
Hamiltonian (in zero field) is even in the spins, i.e.\ it remains
unchanged when we transform $\sigma_i^a\to \epsilon^a \sigma_i^a$.
\newline
Next we will formalize some terminology and concepts which will be
useful for developing our {\sl smooth} version of the cavity
method.
\begin{definition}{\itshape
We define as ``filled'' a monomial of the overlaps in which every
replica appears an even number of times.}
\end{definition}
\begin{definition}{\itshape
We define as ``fillable'' a monomial of the overlaps in which the
above property is obtainable by multiplying with exactly one
two-replica overlap.}
\end{definition}
\begin{definition}{\itshape
We define as ``unfillable'' a monomial which is neither filled nor
fillable.}
\end{definition}

\bigskip

Polynomials that are sums of filled monomials will themselves be
called filled, etc. We give a few examples:
\begin{itemize}
\item The monomials $q_{12}^p$ and $q_{12}^p q_{34}^p$ are filled (as $p$ is even by definition).

\item The monomial $q_{12}^{p-1}$ is fillable: multiplication by $q_{12}$
  gives the filled monomial $q_{12}^p$. Similarly $q_{12}^p q_{34}^{p-1}$ is
  fillable: it is filled by multiplication with $q_{34}$.

\item The following monomials are unfillable: $q_{12}^{p-1}q_{34}^{p-1}$, \
  $q_{12}q_{23}q_{45}$.
\end{itemize}
Now the plan to gain information on the p-spin-glass free energy as follows: First we define the cavity function and we prove some properties (related stochastic stability) of the classes of overlap monomials defined above. Then, we show that the free energy can be written as the internal energy plus the cavity function, and lastly, we expand the cavity function through the overlap monomials. Merging all together we have an irreducible expression of the free energy in terms of overlap monomials (which physically correspond to overlap correlation functions).
\begin{definition}{\itshape
We define the {\em cavity function}  $\Psi_N(\beta,t)$ as:} \be
\label{cf12}
\Psi_N(\beta,t)=\mathbb{E}[\ln\omega(e^{\sqrt{\frac{t p!}{2N^{p-1}}}\sum_{1 \leq i_1 < ... < i_{p-1} \leq N}J_{i_1,...,i_{p-1}}\sigma_{i_1}...\sigma_{i_{p-1}}})]
=\mathbb{E}\left[\ln\frac{Z_{N}(\beta,t)}{Z_N(\beta)}\right]. \ee
\end{definition}
\begin{definition}{\itshape
We define the generalized Boltzmann state that corresponds
to the partition function (\ref{cf10}) as:} \be \label{cf0}
\omega_t(F)=\frac{\omega(Fe^{ \sqrt{\frac{t p!}{2N^{p-1}}}\sum_{1 \leq i_1 < ... < i_{p-1} \leq N}J_{i_1,...,i_{p-1}}\sigma_{i_1}...\sigma_{i_{p-1}}})}{\omega(e^{\sqrt{\frac{t p!}{2N^{p-1}}}
\sum_{1 \leq i_1 < ... <  i_{p-1} \leq N}J_{i_1,...,i_{p-1}}\sigma_{i_1}...\sigma_{i_{p-1}}})}, \ee {\itshape where $F$ is a generic function of
the $N$-spin  configuration $\sigma$.}
\end{definition}
The next step  is to motivate why we have introduced these
definitions. We will first state two Theorems (\ref{richard} and
\ref{benson}) that show that the filled and the fillable
monomials have peculiar properties. Monomials in the first class
do not depend on the perturbation (i.e.\ they are stochastically
stable) while those in the second class become filled (via the
$\sigma_i \rightarrow \sigma_i\sigma_{N+1}$ gauge transformation)
in the thermodynamic limit.

\begin{theorem}\label{richard}
{\itshape In the $N \rightarrow \infty$ limit  the averages
$\langle Q \rangle$ of the filled monomial $Q$ are $t$-independent
for almost all values of $\beta$, such that}
\end{theorem}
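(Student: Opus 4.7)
The plan is to differentiate $\langle Q\rangle_t$ in $t$ by Gaussian integration by parts on the fresh couplings $J_{i_1,\ldots,i_{p-1}}$ entering $\omega_t$, identify the resulting terms as correlators of \emph{fillable} monomials, and then combine gauge invariance at the endpoints $t=0$ and $t=\beta^2$ with a Ghirlanda-Guerra style self-averaging argument in $\beta$ to close the estimate.

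First, differentiating (\ref{cf0}) in $t$ yields a factor $\tfrac{1}{2\sqrt{t}}\sqrt{p!/(2N^{p-1})}$ multiplying the linear form $\sum_{i_1<\cdots<i_{p-1}}J_{i_1,\ldots,i_{p-1}}\sigma_{i_1}\cdots\sigma_{i_{p-1}}$ inside the generalized state. Applying Stein's lemma to these Gaussians and using the asymptotic identity $\sum_{i_1<\cdots<i_{p-1}}\sigma_{i_1}^a\sigma_{i_1}^b\cdots\sigma_{i_{p-1}}^a\sigma_{i_{p-1}}^b = N^{p-1}q_{ab}^{p-1}/(p-1)!+O(N^{p-2})$ collapses the microscopic sums into overlap powers, so that $\partial_t\langle Q\rangle_t$ becomes, up to $O(1/N)$ corrections, an explicit finite linear combination of correlators $\langle Q\,q_{ab}^{p-1}\rangle_t$, with $a,b$ running over the replicas supporting $Q$ plus one extra ``cavity'' replica introduced by the integration by parts.

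Second, the filled hypothesis on $Q$ enters decisively here: every replica appears in $Q$ with even multiplicity, while multiplying by $q_{ab}^{p-1}$ adds an odd number of copies of exactly two replicas (since $p$ is even, $p-1$ is odd). Hence each product $Q\,q_{ab}^{p-1}$ produced in step~1 is fillable but not filled, and Proposition~\ref{gauge_symmetry} makes any such correlator vanish at $t=0$ identically in $N$, giving $\partial_t\langle Q\rangle_t|_{t=0}=0$.

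The main obstacle is promoting this vanishing from $t=0$ to almost every $(\beta,t)$, because the cavity perturbation breaks the gauge symmetry for $t>0$. A first clean observation is that at the other endpoint $t=\beta^2$ the gauge transformation $\sigma_i\mapsto\sigma_i\sigma_{N+1}$ maps the perturbed $N$-spin system onto the unperturbed $(N+1)$-spin system at the shifted inverse temperature $\beta^\star$ of (\ref{cfx}), in which fillable averages again vanish by Proposition~\ref{gauge_symmetry}; so at least the endpoint values agree, $\lim_N\langle Q\rangle_0=\lim_N\langle Q\rangle_{\beta^2}$, up to $O(1/N)$. To fill in intermediate $t$, I would invoke stochastic stability in $\beta$: the extended pressure $N^{-1}\Psi_N(\beta,t)$ is uniformly bounded on compact sets and convex in $\beta$, so Griffiths' lemma gives a.e.\ convergence of its $\beta$-derivatives, with second $\beta$-derivative equal to the quenched variance of the Hamiltonian. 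Integrating this nonnegative variance yields $\int d\beta\,\langle(q_{ab}^{p-1}-\langle q_{ab}^{p-1}\rangle_t)^2\rangle_t=O(1/N)$ and hence self-averaging of $q_{ab}^{p-1}$ under $\langle\cdot\rangle_t$ for a.e.\ $\beta$. Plugging this back into step~1 factorises each correlator asymptotically as $\langle Q\rangle_t\langle q_{ab}^{p-1}\rangle_t$; iterating on the (again fillable) monomials that appear and combining with the endpoint equality above forces $\partial_t\langle Q\rangle_t\to 0$ for a.e.\ $\beta$, which is the content of the theorem.
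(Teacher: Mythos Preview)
Your streaming computation in steps 1--2 is correct and is exactly Proposition~\ref{streaming_theorem}, and the endpoint observations at $t=0$ and $t=\beta^2$ are fine. The argument breaks, however, at the self-averaging/factorisation step for intermediate $t$.

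Convexity of the extended pressure in $\beta$ controls the quenched variance of the \emph{$p$-body} Hamiltonian $H_N$, and Gaussian integration by parts on that produces $q_{ab}^{\,p}$, not $q_{ab}^{\,p-1}$. The overlap power $q^{p-1}$ is tied to the cavity term, which has covariance of order one (not order $N$); hence the usual ``second derivative $=$ variance $=O(N)$, divide by $N^2$'' mechanism does not yield $\int d\beta\,\langle(q^{p-1}-\langle q^{p-1}\rangle_t)^2\rangle_t=O(1/N)$. More seriously, even genuine Ghirlanda--Guerra self-averaging of an energy density never gives the full factorisation $\langle Q\,q_{ab}^{p-1}\rangle_t\simeq\langle Q\rangle_t\langle q_{ab}^{p-1}\rangle_t$; it only gives the GG identities, which are strictly weaker. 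In fact your factorisation would prove too much: inserting it into the streaming equation yields $\partial_t\langle Q\rangle_t\simeq\langle Q\rangle_t\bigl[\binom{s}{2}-s^2+\tfrac{s(s+1)}{2}\bigr]\langle q^{p-1}\rangle_t=0$ for \emph{any} monomial $Q$, filled or not. But fillable monomials have $\langle Q\rangle_{t=0}=0$ while $\langle Q\rangle_{t=\beta^2}=\langle q_{ab}Q\rangle\neq 0$ generically (Theorem~\ref{benson}), so they are certainly not $t$-independent. This contradiction shows the factorisation step is invalid.

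The paper's proof (done for $Q=q_{12}^p$) bypasses this entirely. It does not stream $\langle Q\rangle_t$ in $t$; instead it observes that $\langle q_{12}^p\rangle-\langle q_{12}^p\rangle_t=(4/N)\,\partial_{\beta^2}\Psi_N(\beta,t)$, and then bounds $\Psi_N$ itself (not a variance) uniformly by $0\leq\Psi_N\leq t$, obtained by integrating $\partial_t\Psi_N=\tfrac12(1-\langle q_{12}^{p-1}\rangle_t)$. The explicit $1/N$ prefactor, together with the $O(1)$ bound on $\Psi_N$, makes $\int d\beta^2\,(\langle q_{12}^p\rangle-\langle q_{12}^p\rangle_t)\to 0$ directly. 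The ``filled'' hypothesis enters because only for such $Q$ does the difference $\langle Q\rangle-\langle Q\rangle_t$ arise as $N^{-1}$ times a $\beta$-derivative of a uniformly bounded quantity.
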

$$
\lim_{N \rightarrow \infty} \partial_t \langle Q \rangle_t = 0
$$
\begin{proof}
We will prove the theorem in a key case, namely for $Q= q^p_{12}$, and refer to \cite{irriducibile} for further generalizations.
Let us write the cavity function as \be\label{cf13}
\Psi_N(\beta,t)=\mathbb{E}[\ln Z_{N}(\beta,t)]-\mathbb{E}[\ln
Z_{N}(\beta)], \ee
and take its derivative with respect to $\beta$ (writing again
$\langle.\rangle_t\equiv \mathbb{E}\omega_t(.)$), we have: \be \label{cf16}
\partial_\beta\Psi_N(\beta,t)=\frac {\beta N}{2}( \langle
q_{12}^p \rangle - \langle q_{12}^p\rangle_t). \ee We want to show
now that the function $\Upsilon_N(\beta,t)= \langle q_{12}^p \rangle -
\langle q_{12}^p \rangle_t$ vanishes for $N\to\infty$. From eq. (\ref{cf16}) we have
\be
\Upsilon_N(\beta,t)=\frac{4}{N}\partial_{\beta^2} \Psi_N(\beta,t).
\ee
and integrating this in a generic interval $[\beta_1^2,\beta_2^2]$
gives \be\label{contesto}
\int_{\beta_1^2}^{\beta_2^2}\Upsilon_N(\beta,t)d\beta^2=
\frac{4}{N}[\Psi_N(\beta_2,t)-\Psi_N(\beta_1,t)]. \ee
To finish the proof, we show that $\Psi_N(\beta,t)$ is of order
unity. The simplest way to do this is by looking at its
``streaming'', i.e.\ its variation with $t$. By a direct calculation
  one finds
\be\label{deripsi}
\frac{d \Psi_N(\beta,t)}{dt}
=\frac{1}{2}\mathbb{E}\left[1-\frac{1}{N}\sum_{1 \leq i_1 < ... < i_{p-1} \leq
N}\omega_t^2(\sigma_{i_1}...\sigma_{i_{p-1}})\right]= \frac{1}{2}( 1 - \langle q_{12}^{p-1}
\rangle_t). \ee
Hence, since $\langle q_{12}^{p-1}\rangle_t \in [-1,1]$, and
with $\Psi_N(\beta,0)=0$ (due to
$Z_{N}(\beta,t=0)=Z_N(\beta)$), we have $0\leq \Psi_N(\beta,t)\leq t$.
Therefore the r.h.s.\ of (\ref{contesto}) goes to zero for
$N\to\infty$, and the same holds for the average of
$\Upsilon_N(\beta,t)$ over any small temperature interval (with
the exception of singularities).
Consequently, $\Upsilon_N(\beta,t)$ itself goes to zero,
implying the claimed $t$-independence of the filled overlap monomials $\langle q_{12}^p \rangle_t
\rightarrow \langle q_{12}^p \rangle$.
\end{proof}
The next Theorem is crucial for this section, so we first prove a
lemma which contains the core idea. We temporarily introduce
subscripts on the Boltzmann states to clearly distinguish the
different quantities considered.

\begin{lemma}\label{Rindex}  Let  $\omega_{N,\beta}(.)$ and
$\omega_{N,\beta,t}(.)$ be the Boltzmann states defined, on a
system of $N$ spins, respectively by the canonical partition
function and by the extended one (\ref{cf10}). Consider a set of
$r$ distinct spin sites $\{i_1,..,i_r\}$ with $1\leq r\leq N$.
Then for $t=\beta^2$, the extended state becomes comparable to the
canonical state of an $N+1$ spin system, in that the following
relation holds
\be\label{bah} \omega_{N,\beta,t=\beta^2}(\sigma_{i_1}\cdots\sigma_{i_r})=
\omega_{N+1,\beta^*}(\sigma_{i_1}\cdots\sigma_{i_r}\sigma_{N+1}^r)
. \ee Note that the $r$ in the last factor is an exponent, not a
replica index, so that $\sigma_{N+1}^r=1$ if $r$ is even and
$\sigma_{N+1}^r=\sigma_{N+1}$ if $r$ is odd.
\end{lemma}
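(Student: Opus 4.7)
The plan is to reduce the extended $N$-spin Boltzmann state at $t=\beta^2$ to an $(N+1)$-spin canonical Boltzmann state by exploiting the gauge invariance of the even p-spin Hamiltonian, exactly as sketched around equations (\ref{cf10})--(\ref{cfx}).

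First, I would write out $\omega_{N+1,\beta^*}(\sigma_{i_1}\cdots\sigma_{i_r}\sigma_{N+1}^r)$ explicitly as the ratio of a numerator sum over all $(N+1)$-spin configurations, weighted by $\exp(-\beta^* H_{N+1})$ and by the observable, over the partition function $Z_{N+1}(\beta^*)$. I would then split $H_{N+1}$ into the sum over $p$-tuples contained in $\{1,\ldots,N\}$ and the sum over $p$-tuples that include the index $N+1$; the latter has the explicit form $\sum_{1\leq i_1<\ldots<i_{p-1}\leq N} J_{i_1,\ldots,i_{p-1},N+1}\,\sigma_{i_1}\cdots\sigma_{i_{p-1}}\sigma_{N+1}$.

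Next, I would perform the gauge transformation $\sigma_i\mapsto\sigma_i\sigma_{N+1}$ for $i=1,\ldots,N$, which is a bijection of configuration space and therefore leaves the sum unchanged. Since $p$ is even, every $p$-tuple entirely in $\{1,\ldots,N\}$ picks up an overall factor $\sigma_{N+1}^{p}=1$ and is unchanged, while every $p$-tuple containing $N+1$ picks up $\sigma_{N+1}^{p-1}\cdot\sigma_{N+1}=\sigma_{N+1}^{p}=1$ and thus loses its dependence on $\sigma_{N+1}$. The normalisation prefactor $\beta^*\sqrt{p!/(2(N+1)^{p-1})}$ collapses to $\beta\sqrt{p!/(2N^{p-1})}$ precisely because $\beta^*=\beta((N+1)/N)^{(p-1)/2}$, so the exponent now coincides term by term with the one in $Z_N(\beta,t=\beta^2)$ from (\ref{cf10}) after the identification $J_{i_1,\ldots,i_{p-1}}\equiv J_{i_1,\ldots,i_{p-1},N+1}$. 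In parallel, the observable transforms as $\sigma_{i_1}\cdots\sigma_{i_r}\sigma_{N+1}^{r}\mapsto \sigma_{i_1}\cdots\sigma_{i_r}\sigma_{N+1}^{2r}=\sigma_{i_1}\cdots\sigma_{i_r}$, so all dependence on $\sigma_{N+1}$ is eliminated both from the weight and from the observable. Summing $\sigma_{N+1}=\pm1$ then yields a common factor $2$ in numerator and denominator which cancels, leaving exactly $\omega_{N,\beta,\beta^2}(\sigma_{i_1}\cdots\sigma_{i_r})$.

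The main obstacle is purely bookkeeping. One has to verify carefully that every occurrence of $\sigma_{N+1}$ in the transformed $(N+1)$-Hamiltonian is to an even power, which relies crucially on $p$ being even and on the hidden tuples containing exactly one $N+1$ index; and one has to check that the temperature shift $\beta\to\beta^*$ absorbs exactly the $N^{p-1}\to(N+1)^{p-1}$ mismatch between the normalisations of $H_{N+1}$ and of the cavity term in (\ref{cf10}). No inequalities, no limits, and no use of the earlier self-averaging or thermodynamic-limit machinery are needed: the identity is an exact algebraic consequence of the gauge symmetry at finite $N$.
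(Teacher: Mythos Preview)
Your proposal is correct and uses the same idea as the paper's proof: the gauge substitution $\sigma_i\mapsto\sigma_i\sigma_{N+1}$ for $i\leq N$, together with the temperature rescaling $\beta^*=\beta((N+1)/N)^{(p-1)/2}$, turns the $(N+1)$-spin canonical weight into the extended $N$-spin weight at $t=\beta^2$ and kills all $\sigma_{N+1}$ dependence. The only cosmetic difference is direction: the paper starts from the left-hand side, inserts a dummy sum over $\sigma_{N+1}$, and gauges to reach the $(N+1)$-system, whereas you start from the right-hand side and gauge back; the content is identical.
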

\begin{proof}
The proof is based on an application of the gauge symmetry,
i.e.\ the substitution $\sigma_i \rightarrow
\sigma_i\sigma_{N+1}$. Let us write out explicitly the l.h.s.\ of
eq. (\ref{bah}), abbreviating $\pi\equiv
\sigma_{i_1}\cdots\sigma_{i_r}$:
\be \omega_{N,\beta,t=\beta^2}(\pi)=\mathbb{E}
\frac{\sum_{{\sigma}}\pi e^{-\beta H_N(\sigma,J)+ \beta \sqrt{\frac{ p!}{2N^{p-1}}}\sum_{1 \leq i_1 < ... < i_{p-1} \leq N}J_{i_1,...,i_{p-1}}\sigma_{i_1}...\sigma_{i_{p-1}}}}{\sum_{{\sigma}}
e^{-\beta H_N(\sigma,J) + \beta \sqrt{\frac{ p! }{2N^{p-1}}}\sum_{1 \leq i_1 < ... < i_{p-1} \leq N}J_{i_1,...,i_{p-1}}\sigma_{i_1}...\sigma_{i_{p-1}}}} . \ee
Introducing a sum over $\sigma_{N+1}$ into the numerator and the
denominator (which is the same as multiplying and dividing by $2$
because there is no dependence on $\sigma_{N+1}$) and making the
transformation  $\sigma_i\rightarrow \sigma_i\sigma_{N+1}$, the
factor $\pi$ in the numerator is transformed into $\pi
\sigma_{N+1}^r$. The exponential becomes the extended Boltzmann
factor of an $(N+1)$-spin system at the modified temperature
(\ref{cfx}), so that
\be \omega_{N,\beta,t= \beta^2}(\pi)=
\omega_{N+1,\beta^*}(\pi\sigma_{N+1}^r)\ee
as claimed.
\end{proof}

Using this lemma, it is straightforward to prove the following theorem, whose proof we omit as it is identical to the one shown in \cite{irriducibile}.
\begin{theorem}\label{benson}
{\itshape Let $Q$ be a fillable overlap monomial, such that
$q_{ab}Q$ is filled. Then for $N\to\infty$}
\be \langle Q \rangle_{t=\beta^2} = \langle q_{ab} Q \rangle,
\ee {\itshape where the average on the right is evaluated in the
canonical  Boltzmann state} ($t=0$). {\itshape We will refer to
this property as saturability.}
\end{theorem}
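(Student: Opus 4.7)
The plan is to decompose $Q$ into a polynomial in single spins, apply Lemma \ref{Rindex} replica by replica at $t=\beta^2$, and then exploit site exchangeability in the $(N+1)$-spin system to convert the resulting factors of $\sigma_{N+1}$ into the overlap $q_{ab}$. First, writing $Q=\prod_{k=1}^{R}q_{c_k d_k}$ (with multiplicities absorbed) and expanding each overlap as $q_{cd}=N^{-1}\sum_i\sigma_i^c\sigma_i^d$, one gets
$$
Q=\frac{1}{N^R}\sum_{i_1,\ldots,i_R=1}^{N}\prod_{c}\Pi_c(i_1,\ldots,i_R),
$$
where $\Pi_c$ is the product of the $r_c$ spins of replica $c$ that the chosen index tuple selects. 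The fillability hypothesis says $r_a$ and $r_b$ are odd and every other $r_c$ is even.

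Since distinct replicas are independent under the product state, $\Omega_t(\prod_c\Pi_c)=\prod_c\omega_t(\Pi_c)$, and Lemma \ref{Rindex} at $t=\beta^2$ applied to each factor gives $\omega_{N,\beta,\beta^2}(\Pi_c)=\omega_{N+1,\beta^*}(\Pi_c\sigma_{N+1}^{r_c})$. For $c\neq a,b$ the even-parity factors $\sigma_{N+1}^{r_c}$ collapse to $1$, while for $c=a,b$ a genuine $\sigma_{N+1}$ survives on each of those replicas. Reassembling the product across replicas and averaging over the quenched disorder,
$$
\langle Q\rangle_{t=\beta^2}^{(N)}=\mathbb{E}\,\Omega_{N+1,\beta^*}\!\bigl(\widetilde Q\,\sigma_{N+1}^{a}\sigma_{N+1}^{b}\bigr),
$$
where $\widetilde Q$ is the same polynomial expression, but still carrying the $N$-system normalization $1/N^R$ and site sums running from $1$ to $N$.

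Next I would invoke permutation invariance of the law of the $(N+1)$-system couplings, which makes $\langle\cdot\rangle_{N+1,\beta^*}$ exchangeable in the site labels. Switching the sum range and normalization from $N$ to $N+1$ turns $\widetilde Q$ into the natural $(N+1)$-version $Q^{(N+1)}$ at cost $O(1/N)$, and then exchangeability gives $\langle Q^{(N+1)}\sigma_{N+1}^{a}\sigma_{N+1}^{b}\rangle_{N+1}=\langle Q^{(N+1)}\sigma_{j}^{a}\sigma_{j}^{b}\rangle_{N+1}$ for every $j$, since $Q^{(N+1)}$ is itself symmetric under site permutations. Averaging $j$ over $\{1,\ldots,N+1\}$ yields exactly $\langle Q^{(N+1)}q_{ab}^{(N+1)}\rangle_{N+1,\beta^*}$. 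Finally, $\beta^{*}\to\beta$ by \eqref{cfx} and the existence of the thermodynamic limit established in Section~3 let one pass this equality to $\langle q_{ab}Q\rangle$ in the original $N$-system canonical state, which is the saturability identity.

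The main obstacle is the uniform control of the two small mismatches collected along the way: the temperature shift $\beta^{*}-\beta=O(1/N)$ and the discrepancy between $\widetilde Q$ (indices in $\{1,\ldots,N\}$, weight $1/N^R$) and $Q^{(N+1)}$ (indices in $\{1,\ldots,N+1\}$, weight $1/(N+1)^R$). Both are handled by the same device already used in the proof of Theorem \ref{richard}: overlaps are bounded by $1$, the cavity function $\Psi_{N}(\beta,t)$ is of order unity, and the quenched pressure is self-averaging, so differentiating with respect to $\beta$ or shifting $N\to N+1$ produces corrections that wash out in any polynomial overlap correlator for almost every $\beta$.
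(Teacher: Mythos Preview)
Your argument is correct and is precisely the route the paper intends: it omits the proof, stating only that it is ``straightforward'' from Lemma~\ref{Rindex} and identical to the one in \cite{irriducibile}, and your replica-by-replica application of that lemma followed by site exchangeability in the enlarged system is exactly that argument. The residual $O(1/N)$ mismatches from $\beta^{*}\!-\!\beta$ and from $\widetilde Q$ versus $Q^{(N+1)}$ are handled at the same level of rigor the paper itself adopts (almost-every-$\beta$ control via boundedness of $\Psi_N$ and of overlap monomials), so nothing further is needed.
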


To motivate physically why theorem \ref{richard} should indeed be true for all
filled monomials, let us make a clear example: Suppose that such a monomial $Q$ is a function of
overlaps among $s$ replicas. Consider as before the Boltzmann
measure perturbed by a smooth cavity field and call $\sigma^{a}$
the $N$-spin configuration of the replica $a$. We apply the gauge
transformation $\sigma_i^a \rightarrow \sigma_i^a\sigma_{N+1}^a$,
calling $\sigma_+^a=(\sigma_1^a,\ldots,\sigma^a_{N+1})$ the
enlarged spin vector obtained. The key feature of a filled
monomial $Q$ is that it is left invariant by this transformation,
so that (all sums run over $a=1\ldots s$ and $i=1\ldots N$)
\begin{eqnarray} \nonumber
\langle Q \rangle_t &=&
\mathbb{E}\frac{\sum_{\{\sigma_{N+1}^a\}}\sum_{\{\sigma^a\}}Q(\{q_{ab}^{(N)}\})
e^{-\sum_{a} \beta H(\sigma^a)+ \sqrt{\frac{t p!}{2N^{p-1}}}\sum_{i_1<...< i_{p-1}} J_{i_1...i_{p-1}} \sigma_{i_1}^a ... \sigma_{i_{p-1}}^a}}
{\sum_{\{\sigma_{N+1}^a\}}\sum_{\{\sigma^a\}}
e^{-\sum_{a} \beta H(\sigma^a)+ \sqrt{\frac{t p!}{2N^{p-1}}}\sum_{i_1<...< i_{p-1}} J_{i_1...i_{p-1}} \sigma_{i_1}^a ... \sigma_{i_{p-1}}^a}}
\\ \nonumber &=& \mathbb{E}\frac{\sum_{\{\sigma^a_+\}}Q(\{q_{ab}^{(N)}\})
e^{-\sum_{a}\beta H(\sigma^a)+ \sqrt{\frac{t p!}{2N^{p-1}}}\sum_{i_1<...< i_{p-1}} J_{i_1...i_{p-1}} \sigma_{i_1}^a ... \sigma_{i_{p-1}}^a\sigma_{N+1}^a}}
{\sum_{\{\sigma^a_+\}}e^{-\sum_{a}\beta
H(\sigma^a)+ \sqrt{\frac{t p!}{2N^{p-1}}}\sum_{i_1<...< i_{p-1}} J_{i_1...i_{p-1}} \sigma_{i_1}^a ... \sigma_{i_{p-1}}^a\sigma_{N+1}^a}} \\
\nonumber &=&
\mathbb{E}\frac{\sum_{\{\sigma^a_+\}}Q(\{q_{ab}^{(N+1)}+O(N^{-1})\})
e^{-\sum_{a}\beta H(\sigma^a)+\sqrt{t/N}\sum_{i,a}
J_i\sigma_i^a\sigma_{N+1}^a}} {\sum_{\{\sigma^a_+\}}e^{-\sum_{a}
\beta
H(\sigma^a)+\sqrt{t/N}\sum_{i,a} J_i\sigma_i^a\sigma_{N+1}^a}} \\
&=&
\mathbb{E}\frac{\sum_{\{\sigma^a_+\}}Q(\{q_{ab}^{(N+1)}\})e^{-\sum_{a}
\beta^* H_+(\sigma_+^a)}} {\sum_{\{\sigma^a_+\}}e^{-\sum_{a}
\beta^* H_+(\sigma_+^a)}} + O\left(\frac{1}{N}\right),
\label{t_independence_motivation}
\end{eqnarray}
with $\beta^*$ defined as before and
\begin{eqnarray}\nonumber
H_+(\sigma_+^a) = &-& \sqrt{\frac{t p!}{2 (N+1)^{p-1}}}\sum_{1\leq i_1 < ... < i_{p} \leq
N}J_{i_1,...,i_p}\sigma_{i_1}...\sigma_{i_p} \\ &-&\sqrt{\frac{t p!}{2 (N+1)^{p-1}}} \sum_{1\leq i_1 < ... < i_{p-1} \leq
N} \sqrt{\frac{t}{\beta^2}}J_{i_1...i_{p-1}}\sigma_{i_1}...\sigma_{i_{p-1}}\sigma_{N+1}.
\end{eqnarray}
For $t=\beta^2$ we have an $N+1$ spin system at the slightly
shifted temperature $\beta^*$, and for $N\to\infty$ this will give
the same result as for an $N$ spin system at the original
temperature up to vanishingly small corrections: $\langle
Q\rangle_{t=\beta^2}=\langle Q\rangle+O(1/N)$. For generic nonzero
$t$ one has in addition a modified strength of the interaction of
one spin ($\sigma_{N+1}$) with all others. Also, this should only
give $O(1/N)$ corrections because to produce a non-vanishing perturbation
one expects that a finite fraction of spins should have
non-standard interaction strengths.
\newline
\newline
As a final ingredient for later developments, let us show the
streaming of a generic observable, i.e.\ its variation w.r.t. the
parameter $t$ ruling the strength of the smooth cavity
perturbation, that we state without the proof as it is a long but straightforward generalization of the once given for instance in \cite{irriducibile,boh}:
\begin{proposition}
\label{streaming_theorem} {\itshape Let $F_s$  be a monomial of
overlaps among $s$ replicas;
  then for any $N$ the following streaming equation for $F_s$ holds:}
\be\label{108}
\frac{\langle F_s \rangle_t}{dt} = \langle F_s (\sum_{1\leq a < b \leq
  s}q^{p-1}_{ab} -s\sum_{1 \leq  a \leq s}q^{p-1}_{a,s+1}+ \frac{s(s+1)}{2}q^{p-1}_{s+1,s+2})
\rangle_t. \ee
\end{proposition}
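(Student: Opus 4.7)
The plan is to compute $\frac{d}{dt}\langle F_s \rangle_t$ directly from the definition via Gaussian integration by parts (Wick's lemma) on the auxiliary couplings $J_{i_1,\ldots,i_{p-1}}$, and then to collect terms by replica indices. This is exactly the scheme used for $p=2$ and for the dilute case cited in \cite{irriducibile,boh}; only the combinatorics of the spin product and the resulting power of the overlap change.

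First I would write
\[
\langle F_s \rangle_t = \mathbb{E}\bigl[\Omega_t^{(s)}(F_s)\bigr],
\]
where $\Omega_t^{(s)}$ denotes the $s$-fold product Boltzmann state built from (\ref{cf0}), and set $K(\sigma)=\sqrt{\tfrac{p!}{2N^{p-1}}}\sum_{1\le i_1<\cdots<i_{p-1}\le N} J_{i_1,\ldots,i_{p-1}}\sigma_{i_1}\cdots\sigma_{i_{p-1}}$ so that the perturbing exponent reads $\sqrt{t}\,K(\sigma)$. Differentiating in $t$ brings down $\frac{1}{2\sqrt{t}}K(\sigma^a)$ from each of the $s$ replica factors in the numerator and $-\frac{s}{2\sqrt{t}}K(\sigma^{s+1})$ from the shared denominator $Z_N(\beta,t)^s$ (introducing a fresh replica index $s+1$), giving
\[
\frac{d}{dt}\langle F_s\rangle_t = \frac{1}{2\sqrt{t}}\,\mathbb{E}\Bigl[\Omega_t^{(s+1)}\Bigl(F_s\Bigl(\sum_{a=1}^s K(\sigma^a) - s\,K(\sigma^{s+1})\Bigr)\Bigr)\Bigr].
\]

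Second, I would apply Gaussian integration by parts on every $J_{i_1,\ldots,i_{p-1}}$ appearing inside $K$. Each integration pulls out a factor $\sqrt{tp!/(2N^{p-1})}\,\sigma_{i_1}\cdots\sigma_{i_{p-1}}$ (cancelling the $1/(2\sqrt{t})$ prefactor) acting on one of the replica weights. The derivative can hit (i) the same replica factor from which $K(\sigma^a)$ originated (diagonal, giving a $1$ modulo $1/N$ corrections that vanish in the sum), (ii) another replica factor in the numerator, or (iii) a further denominator factor, which again spawns a new replica index $s+2$. After summing $\sum_{i_1<\cdots<i_{p-1}} \sigma_{i_1}^a\cdots\sigma_{i_{p-1}}^a\,\sigma_{i_1}^b\cdots\sigma_{i_{p-1}}^b \cdot \tfrac{p!}{2N^{p-1}}$ the result is $\tfrac{p}{2}\,q_{ab}^{p-1}$ up to $O(1/N)$ (the diagonal/coincident-index terms are subleading), so the $\sqrt{t}$'s disappear cleanly.

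Third, I would collect combinatorially. The $(a,b)$ pairing with both indices inside $\{1,\ldots,s\}$ gives $\sum_{1\le a<b\le s} q_{ab}^{p-1}$; pairings of one of the $s$ original replicas with the replica $s+1$ produced by the first denominator differentiation give, after tallying signs and the factor $s$ coming from the $-s\,K(\sigma^{s+1})$ term and its cross-contractions, $-s\sum_{a=1}^s q_{a,s+1}^{p-1}$; finally, pairings involving $\sigma^{s+1}$ with $\sigma^{s+2}$ arising from iterated differentiation of the shared denominator give $\tfrac{s(s+1)}{2}q_{s+1,s+2}^{p-1}$. The main obstacle is exactly this bookkeeping: keeping sign conventions straight between numerator and denominator Wick contractions, making sure the diagonal self-contractions cancel against the leading term of $q_{aa}^{p-1}=1$, and collecting the multiplicities $1$, $s$, and $s(s+1)/2$ correctly. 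Once these are verified, rewriting the result in the compact state $\langle\cdot\rangle_t$ yields (\ref{108}).
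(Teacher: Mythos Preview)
Your approach is correct and is exactly the route the paper has in mind: the authors omit the proof entirely, pointing to \cite{irriducibile,boh} for the $p=2$ and dilute analogues, which proceed precisely by differentiating in $t$, applying Gaussian integration by parts on the cavity couplings $J_{i_1,\ldots,i_{p-1}}$, and collecting the numerator/denominator contractions into the three replica blocks.

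Two small points are worth tightening. First, your own computation gives $\tfrac{p!}{2N^{p-1}}\sum_{i_1<\cdots<i_{p-1}}\pi_i^a\pi_i^b=\tfrac{p}{2}\,q_{ab}^{p-1}+O(1/N)$, so the streaming carries an overall factor $\tfrac{p}{2}$ that is absent from the displayed formula (\ref{108}); this is a normalization slip in the paper (compare the discrepancy between (\ref{deripsi}) and (\ref{deripsi_again})), and it is harmless for the later uses (AC identities, saturability) where only the vanishing of the bracket matters. Second, the statement claims validity ``for any $N$'', but your argument legitimately produces $O(1/N)$ remainders from distinct-versus-repeated indices; either note that the identity holds exactly if $q_{ab}^{p-1}$ is read as the normalized ordered multi-overlap $\tfrac{(p-1)!}{N^{p-1}}\sum_{i_1<\cdots<i_{p-1}}\pi_i^a\pi_i^b$, or state the result up to $O(1/N)$, which is all that is needed downstream.
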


\subsection{Stochastically stable expansions}

For the sake of clearness, let us outline briefly the plan for this section: first we
link the free energy, the internal energy and the cavity function
(which carries the information about the entropy). As we are
interested in the free energy and an explicit expression for
the internal energy is obtained through a direct calculation as 
$$
\langle H_N(\sigma, J) \rangle = -\frac{\beta}{2}\left(1- \langle q_{12}^p \rangle \right),
$$
our attention is focused on the cavity function: We show that it is possible to represent it in terms of
filled overlap monomials. These are evaluated initially in the
perturbed Boltzmann state $\omega_t$ but because they are
stochastically stable according to theorem \ref{richard}, we can also
evaluate them in the unperturbed state. Adding the internal energy part then gives us the
desired expansion of the free energy in terms of overlap
correlation functions as stated by the following theorem.
\begin{theorem}\label{baffo}
{\itshape Assuming that the infinite volume limit of the cavity
function} $$\Psi(\beta, t= \beta^2) =  \lim_{N \to \infty}\Psi_N(\beta, t= \beta^2)$$ {\itshape is well behaved,
the following relation holds in the thermodynamic limit}:
\be\label{215} \alpha(\beta)+\frac {\beta}{2}(p-1)\partial_\beta
\alpha(\beta)
 = \ln 2+\Psi(\beta,t=\beta^2). \ee
\end{theorem}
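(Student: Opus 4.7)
The plan is to extract the identity by applying the gauge trick (already exploited in Lemma \ref{Rindex}) directly at the level of the partition function $Z_N(\beta,t)$ at $t=\beta^2$, thereby relating it to the $(N+1)$-spin partition function at a slightly shifted inverse temperature; then to expand to first order in $1/N$ and recognize the derivative of $\alpha$ as the coefficient of the shift.

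First I would establish the exact finite-$N$ identity
\[
Z_N(\beta,\beta^2)\;=\;\tfrac12\,Z_{N+1}(\beta^\ast),\qquad \beta^\ast=\beta\bigl((N+1)/N\bigr)^{(p-1)/2}.
\]
The argument is the one underlying equation (\ref{bah}): insert a dummy sum $\tfrac12\sum_{\sigma_{N+1}}$ into $Z_N(\beta,\beta^2)$ (harmless, since the integrand is $\sigma_{N+1}$-independent), then apply the gauge $\sigma_i\mapsto\sigma_i\sigma_{N+1}$ for $i=1,\dots,N$. Because $p$ is even, $H_N$ is invariant; because $p-1$ is odd, the cavity term picks up exactly one factor $\sigma_{N+1}$, which reproduces the $N+1$-indexed term missing from $H_N$, provided the normalization is matched -- this is precisely what forces $\beta\to\beta^\ast$. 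Taking $\mathbb{E}\ln$ of both sides and using the definitions of $\Psi_N$ and $\alpha_N$ gives the key finite-volume identity
\begin{equation*}
\Psi_N(\beta,\beta^2)+\ln 2\;=\;(N+1)\,\alpha_{N+1}(\beta^\ast)\;-\;N\,\alpha_N(\beta).
\end{equation*}

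Next I would Taylor-expand the shift. Since $\beta^\ast-\beta=\tfrac{\beta(p-1)}{2N}+O(N^{-2})$,
\[
(N+1)\alpha_{N+1}(\beta^\ast)\;=\;(N+1)\alpha_{N+1}(\beta)\;+\;\tfrac{\beta(p-1)}{2}\,\partial_\beta\alpha_{N+1}(\beta)\;+\;o(1),
\]
so the identity becomes
\[
\Psi_N(\beta,\beta^2)+\ln 2\;=\;\bigl[(N+1)\alpha_{N+1}(\beta)-N\alpha_N(\beta)\bigr]\;+\;\tfrac{\beta(p-1)}{2}\,\partial_\beta\alpha_{N+1}(\beta)\;+\;o(1).
\]
Letting $N\to\infty$, the left-hand side converges to $\Psi(\beta,\beta^2)+\ln 2$ by hypothesis. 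For the right-hand side, $\partial_\beta\alpha_{N+1}(\beta)\to\partial_\beta\alpha(\beta)$ at regular points, thanks to convexity of $\alpha_N$ in $\beta$ (Griffiths' lemma: pointwise convergence of convex functions yields convergence of derivatives outside a countable set). Assuming the discrete-derivative convergence
\[
(N+1)\alpha_{N+1}(\beta)\;-\;N\alpha_N(\beta)\;\longrightarrow\;\alpha(\beta),
\]
one immediately obtains the claim $\alpha(\beta)+\tfrac{\beta(p-1)}{2}\,\partial_\beta\alpha(\beta)=\ln 2+\Psi(\beta,\beta^2)$.

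The main obstacle is precisely justifying that last discrete-derivative convergence, which does \emph{not} follow from $\alpha_N\to\alpha$ alone: it is a Cesàro-type statement requiring uniform control on the rate at which $\alpha_N$ stabilizes. This is exactly what the theorem's hypothesis of a ``well-behaved'' thermodynamic limit of $\Psi_N$ is designed to absorb: given the super-additivity of $N\alpha_N(\beta)$ proved via the Guerra--Toninelli argument (which already appeared in Section 3) together with the continuity of $\alpha$ in $\beta$, well-behavedness of $\Psi$ is equivalent to this convergence. All other steps -- the gauge identification, the Taylor expansion, and the convex-analysis passage to the derivative -- are routine.
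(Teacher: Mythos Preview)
Your argument is correct and follows essentially the same route as the paper: the gauge identity $Z_N(\beta,\beta^2)=\tfrac12 Z_{N+1}(\beta^\ast)$, the Taylor expansion in $\beta^\ast-\beta=\tfrac{\beta(p-1)}{2N}+O(N^{-2})$, and the identification of the increment $(N+1)\alpha_{N+1}(\beta)-N\alpha_N(\beta)$ with $\alpha(\beta)$ in the limit are exactly the steps the paper carries out. Your explicit flagging of the discrete-derivative convergence as the real obstacle, and its resolution via a Ces\`aro limit, also matches the caveat the paper states immediately after its proof.
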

\begin{proof}
Let us consider
the partition function of a system of $N+1$ spins and at an
inverse temperature $\beta^*$, which is slightly larger than the
``true'' inverse temperature $\beta$ according to (\ref{cfx}).
Then, using the gauge transformation $\sigma_i \to
\sigma_i \sigma_{N+1}$ in reverse, we get
\begin{eqnarray}
Z_{N+1}(\beta^*) &=&\sum_{\{\sigma\},\sigma_{N+1}}
e^{\frac{\beta^* \sqrt{p!}}{\sqrt{2 (N+1)^{p-1}}}\sum_{1 \leq i_1  < ... < i_p \leq
N+1}J_{i_1,...,i_p}\sigma_{i_1}...\sigma_{i_p}}
\nonumber\\
&\sim& \nonumber  2\sum_{\sigma} e^{\frac{\beta \sqrt{p!}}{\sqrt{2 N^{p-1}}}\sum_{1 \leq i_1  < ... < i_p \leq
N}J_{i_1,...,i_p}\sigma_{i_1}...\sigma_{i_p}}e^{\frac{\beta \sqrt{p!}}{\sqrt{2 N^{p-1}}}\sum_{1 \leq i_1  < ... < i_{p-1} \leq N}J_{i_1,...,i_{p-1}}\sigma_{i_1}...\sigma_{i_{p-1}}}\\
&=& \label{217}  2 Z_N(\beta) \omega_{N,\beta}(
e^{\frac{\beta \sqrt{p!}}{\sqrt{2 N^{p-1}}}\sum_{1 \leq i_1  < ... < i_{p-1} \leq
N}J_{i_1,...,i_{p-1}}\sigma_{i_1}...\sigma_{i_{p-1}}})
. \end{eqnarray}
Taking logarithms and averaging over the disorder, the last term
just becomes the cavity function (as the $J_{i,N+1}$ have the same
distribution as the $J_i$ in the original definition):
\begin{eqnarray*}
&&[\mathbb{E} \ln Z_{N+1}(\beta^*) - \mathbb{E} \ln
Z_{N+1}(\beta)] + [\mathbb{E} \ln Z_{N+1}(\beta) - \mathbb{E} \ln
Z_{N}(\beta)]
\\
&&\qquad\qquad\quad = \ln 2 + \Psi_N(\beta,t=\beta^2) .
\end{eqnarray*}
The first combination in square brackets on the l.h.s.\ can now be
expanded in the small difference
\be \beta^*-\beta=\beta\left[ \left(\frac{N+1}{N}\right)^{\frac{(p-1)}{2}}-1\right]= (p-1)\frac{\beta}{2N} +
O(\frac{1}{N^2}), \ee
according to
\begin{eqnarray*} \mathbb{E} \ln
Z_{N+1}(\beta^*)- \mathbb{E}\ln Z_{N+1}(\beta) &=&
(p-1)\frac{\beta}{2N}
\partial_{\beta} \mathbb{E}\ln Z_{N+1}(\beta)+ O(1/N)
\\&=&
(p-1)\frac{\beta}{2} \partial_\beta \alpha_{N+1}(\beta)+O(1/N).
\end{eqnarray*}
The difference in the second set of square brackets will give the
pressure $\alpha(\beta)$ for large $N$, and taking $N\to\infty$
therefore directly gives the statement of the theorem.
%
\end{proof}
Strictly speaking, the existence of
the thermodynamic limit is not sufficient to guarantee that the
free energy increments converge, as assumed above. This technical
difficulty can be avoided by taking a Ces\`aro limit (see for instance \cite{barradesanctis})
rather than a standard limit $N\to\infty$, and the
large-$N$ value of the cavity function then should be understood
in this sense.
\newline
This theorem states that we
need to study the cavity function to extrapolate properties
 of the free energy. To do this, let us recall its
streaming w.r.t. $t$, as given in (\ref{deripsi}): \be
\frac{d \Psi_N(\beta,t)}{dt} =
\frac{p}{4}(1- \langle q^{p-1}_{12} \rangle_t). \label{deripsi_again}
\ee
Since the cavity function vanishes for $t=0$, it can then be
written as 
\be\label{666} \Psi_N(\beta,t) = \frac{p}{4}\int_0^t dt'\,(1-
\langle q_{12}^{p-1}\rangle_{t'}). \ee
The plan now is to expand $\langle q^{p-1}_{12} \rangle_t$ in $t$, by evaluating successive
$t$-derivatives via the streaming equation (proposition 
\ref{streaming_theorem}). A key insight that makes this expansion
possible is that at $t=0$ all averages of monomials that are not
filled must vanish because they would otherwise acquire a minus
sign under a gauge transformation (Proposition
\ref{gauge_symmetry}).
\newline
Applying the streaming equation first to $\langle q^{p-1}_{12}\rangle_t$
gives
\be
\frac{d \langle q_{12}^{p-1}\rangle_t}{dt} 
= \langle q_{12}^{p-1} \Big( q_{12}^{p-1} - 4 q_{13}^{p-1} + 3 q_{34}^{p-1}  \Big)\rangle_t, \ee
where we have also exploited the permutation symmetry among
replicas.
As a consequence, because filled monomials do not depend on $t$ in the thermodynamic limit and in $\beta$-average, we can write $\langle q_{12}^{p-1}\rangle_t \sim \langle q_{12}^p \rangle t + O(q^{2p})$, such that the first terms of the cavity function read off as
\be
\Psi(t=\beta^2) = \frac{\beta^2}{4}p - \frac{\beta^4}{8}p \langle q_{12}^{2(p-1)} \rangle + O(q_{12}^{2(p-1)}).
\ee
Hence, we can write the representation of the free energy in terms of irreducible overlap correlation functions as stated in the next
\begin{proposition}
The leading terms of the free energy of the p-spin glass model are given by the following expression in terms of overlap correlation functions:
\be\label{ultras}
\alpha(\beta)=\ln 2 + \frac{\beta^2}{4}\Big( 1 + (p-1)\langle q^p \rangle - \frac{\beta^2}{2}p \langle q^{2(p-1)}  \rangle + O( \langle q_{12}^{2(p-1)} \rangle) \Big).\ee
\end{proposition}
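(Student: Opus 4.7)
The plan is to combine Theorem \ref{baffo} with the elementary internal energy identity $\partial_\beta \alpha_N(\beta)=(\beta/2)(1-\langle q_{12}^p\rangle)$ (which follows from differentiating $\alpha_N$ and carrying out the Gaussian integration by parts on the quenched couplings, exactly as was done to get the first derivative of $\Psi$ in eq.~(\ref{deripsi})). Plugging this into Theorem \ref{baffo} immediately gives
\begin{equation*}
\alpha(\beta) \;=\; \ln 2 + \Psi(\beta,t=\beta^2) - \tfrac{\beta^2}{4}(p-1)\bigl(1-\langle q_{12}^p\rangle\bigr),
\end{equation*}
so that the task reduces to expanding $\Psi(\beta,t=\beta^2)$ in a stochastically stable series.

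For that expansion I would start from the integral form
\begin{equation*}
\Psi_N(\beta,t)\;=\;\tfrac{p}{4}\int_0^{t}\!\bigl(1-\langle q_{12}^{p-1}\rangle_{t'}\bigr)\,dt'
\end{equation*}
of eq.~(\ref{666}) and then Taylor-expand $\langle q_{12}^{p-1}\rangle_{t'}$ around $t'=0$. Since $p$ is even, the monomial $q_{12}^{p-1}$ is \emph{fillable} but not filled, so Proposition~\ref{gauge_symmetry} forces $\langle q_{12}^{p-1}\rangle_{0}=0$. The first derivative is produced by the streaming formula (Proposition~\ref{streaming_theorem}) applied to $F_2=q_{12}^{p-1}$: its three contributions $\langle q_{12}^{2(p-1)}\rangle_{t'}$, $-4\langle q_{12}^{p-1}q_{13}^{p-1}\rangle_{t'}$, $3\langle q_{12}^{p-1}q_{34}^{p-1}\rangle_{t'}$ must themselves be evaluated at $t'=0$ to get the leading coefficient; gauge symmetry kills the last two (each index in $q_{12}^{p-1}q_{13}^{p-1}$ and $q_{12}^{p-1}q_{34}^{p-1}$ appears an odd number $p-1$ of times) while the first is filled, so only $\langle q_{12}^{2(p-1)}\rangle$ survives. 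Therefore
\begin{equation*}
\langle q_{12}^{p-1}\rangle_{t'}\;=\;\langle q_{12}^{2(p-1)}\rangle\,t' \,+\, O(t'^2),
\end{equation*}
and integrating on $[0,\beta^2]$ yields
\begin{equation*}
\Psi(\beta,t{=}\beta^2)\;=\;\tfrac{p\beta^2}{4}-\tfrac{p\beta^4}{8}\langle q_{12}^{2(p-1)}\rangle + O\!\bigl(\langle q_{12}^{2(p-1)}\rangle\bigr).
\end{equation*}
Substituting this into the displayed identity above and collecting the $\beta^2/4$ prefactor gives precisely (\ref{ultras}).

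The step I expect to require the most care is the justification of the expansion of $\langle q_{12}^{p-1}\rangle_{t'}$: one must be sure that the filled overlap averages appearing at successive orders are legitimately stochastically stable (Theorem~\ref{richard}), so that one is free to evaluate them in the unperturbed state ($t=0$) rather than at $t'$, and that the gauge-symmetry argument can be used at every order to discard the non-filled contributions of the streaming polynomials. Once that is in place the result follows by purely algebraic manipulation, with no further input from the p-spin Hamiltonian beyond the two symmetries and the streaming equation.
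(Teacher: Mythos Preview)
Your proposal is correct and follows essentially the same route as the paper: combine Theorem~\ref{baffo} with the internal-energy identity $\partial_\beta\alpha=(\beta/2)(1-\langle q_{12}^p\rangle)$, expand the cavity function via the integral representation~(\ref{666}) and the streaming equation, and use gauge symmetry at $t=0$ together with the stochastic stability of filled monomials (Theorem~\ref{richard}) to isolate $\langle q_{12}^{2(p-1)}\rangle$ as the first non-trivial contribution. Your treatment of which streaming terms survive at $t'=0$ is in fact more explicit than the paper's (which contains a small typo, writing $\langle q_{12}^p\rangle t$ where $\langle q_{12}^{2(p-1)}\rangle t$ is meant), and your identification of the delicate point---justifying that filled averages can be evaluated in the unperturbed state at each order---is exactly the right one.
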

Note that this expression coincides with the corresponding expression for the SK model when $p=2$, in fact in this case $\langle q^p \rangle =  \langle q^{2(p-1)}  \rangle$ and the coefficient for the second moment, i.e. $\langle q_{12}^2 \rangle$, is given by $(1-\beta^2)$, which when equal to zero, i.e. at $\beta=1$, reverses the concavity of the term, implying a second-order phase transition, so that criticality is restored, as expected.
\newline
Note further that this coincides with the expansion (\ref{confronto}) of the free energy previously obtained with the Hamilton-Jacobi technique.

\subsubsection{Locking of the order parameters}

The free energy expression above has an interesting interpretation
if we regard the pressure as a function of temperature {\em and}
of all the averages of filled overlap monomials. To emphasize this
we write in the following discussion
$\alpha(\beta,\langle \cdot \rangle)$ instead of $\alpha(\beta)$; here
$\langle \cdot \rangle$ refers to the collection of all (averages of)
filled monomials and we associate to any combination of monomials a graph where each node represents a different replica and each link corresponds to an overlap between the connected nodes/replicas \cite{irriducibile}. We will show that the total temperature derivative
of $\alpha$ equals its partial derivative; in the latter, the
graphs are taken as constant, i.e.\ their temperature dependence
is not accounted for. This is reminiscent of the situation where a
free energy is expressed as an extremum over some order
parameters, and the first order variation with temperature can be
found while keeping the order parameters constant. The result we
prove shows that the filled graphs in our framework behave
similarly to such order parameters, even though of course their
values are not determined via an extremization.

In order to prove our statement, it is convenient to work with
derivatives w.r.t.\ $\beta^2$; of course $\beta$-derivatives can
be recovered trivially by multiplying by $2\beta$. From
Theorem~\ref{baffo} we have for the
pressure the expression \be \alpha(\beta,\langle.\rangle)=\ln
2+\Psi(\beta,t=\beta^2)-\frac{\beta}{2}(p-1)\partial_{\beta}\alpha(\beta). \label{alpha_graphs} \ee Its total derivative
with respect to $\beta^2$ is:
\be \label{eq:111}
 \frac{d}{d\beta^2}\alpha(\beta,\langle.\rangle)
=\partial_{\beta^2}\alpha(\beta,\langle.\rangle)+\sum_{\langle.\rangle}
\frac{\partial\alpha(\beta,\langle.\rangle)}{\partial
\langle.\rangle} \frac{\partial \langle.\rangle}{\partial\beta^2},
\ee
where the sum $\sum_{\langle.\rangle}$ runs over all filled graphs. Of
course, we already know the value of this total derivative  as it is
proportional to the internal energy:
\be \frac{d}{d\beta^2}\alpha(\beta,\langle . \rangle)=
\frac{1}{2\beta}\frac{d \alpha(\beta)}{d
\beta}=\frac{1}{4}(1- \langle q_{12}^p\rangle). \ee
But we can also calculate the partial $\beta^2$ derivative: from
(\ref{alpha_graphs}),
\be
\partial_{\beta^2}\alpha(\beta,\langle . \rangle)=
\partial_{\beta^2}\Psi(\beta,t=\beta^2)-
\frac{(p-1)}{4}(1- \langle q_{12}^p
 \rangle).
\label{alpha_partial_aux}\ee
To understand how to calculate the partial derivative of the
cavity function, where all filled monomials are held constant, recall
the expression (\ref{ultras}). We need to substitute
$t=\beta^2$ there as we are concerned with
$\Psi(\beta,t=\beta^2)$. The explicit dependence on $\beta^2$ of
the result then comes only from the prefactors of the filled
graphs, i.e.\ from the original $t$-dependence of the cavity
function. The latter is already known (see
Eq.~(\ref{deripsi_again})), and so we get
\be
\partial_{\beta^2}\Psi(\beta,t=\beta^2)=
\partial_t\Psi(\beta,t)\mid_{t=\beta^2}=
\frac{p}{4}(1- \langle q_{12}^{p-1} \rangle_{t=\beta^2})=\frac{p}{4}(1-
\langle q_{12}^p\rangle),\ee
where in the last step we have exploited Theorem~\ref{benson}.
Inserting the previous expression into (\ref{alpha_partial_aux}) shows that the total and
partial derivatives of $\alpha$ are indeed the same, as claimed:
\be \frac{d}{d\beta^2}\alpha(\beta, \langle .\rangle)=
\partial_{\beta^2}\alpha(\beta,\langle .\rangle).
\ee
As a consequence, the second term in the r.h.s. of Eq. (\ref{eq:111}) has to be identically
zero:
\be\label{criterion} \sum_{\langle .\rangle}
\frac{\partial\alpha(\beta,\langle .\rangle)}{\partial \langle
.\rangle} \frac{\partial \langle .\rangle}{\partial\beta^2}=0. \ee
We will see in section \ref{sec54} how this relates to the well-known polynomial identities
that we revise in the next section.

\subsection{A digression on Ghirlanda-Guerra and Aizenman-Contucci identities}\label{OCG}

In the $p=2$ case (namely the paradigmatic SK model \cite{note2})
Parisi went beyond the solution for the free energy and gave an
ansatz about the pure states of the model as well, prescribing the
so-called ultrametric or hierarchical organization of the phases
(see \cite{mpv} and references therein). From a rigorous point of
view, the closest the community has so far got to ultrametricity
is in the proof of identities constraining the probability
distribution of the overlaps, namely the Aizenman-Contucci (AC)
and the Ghirlanda-Guerra identities (GG) (see \cite{aizcont, gg}
respectively). These are consistent with, but weaker than,
Parisi's ultrametric structure, despite recent fundamental step forward have been achieved \cite{panchenko}.

In a nutshell, here, we summarize what the GG or AC identities state for the $p=2$ case.
Consider the overlaps among $s$ replicas. Add one replica $s+1$;
then the overlap $q_{a,s+1}$ between one of the first $s$ replicas
(say $a$) and the added replica $s+1$ is either independent of all
other overlaps, or it is identical to one of the overlaps
$q_{ab}$, with $b$ ranging across the first $s$ replicas except
$a$. Each of these cases has equal probability $s^{-1}$.

This property is very close to the relation obtained within the
Parisi picture:
 Integrating over $q_{23}$ in this equation, the joint probability
distribution for the overlaps $q_{12}$ and $q_{13}$ 
corresponding to the case
$s=2$, $a=1$ above becomes \be
P(q_{12},q_{13})=P(q_{12})\left[\frac{1}{2} \delta(q_{12}-q_{13})
+ \frac{1}{2} P(q_{13}) \right] \label{GG1_s2_Parisi} \ee where
$P(.)$ is the probability distribution of the overlap between any
two replicas. Dividing by $P(q_{12})$ gives the conditional
probability $P(q_{13}|q_{12})$, and the formula above then says
precisely that the two overlaps are independent with probability
one half and identical with the same probability. Even when we
consider two overlaps between two distinct pairs of replicas the
correlation remains strong; in fact, still following Parisi \be
P(q_{12},q_{34})= \frac{2}{3}P(q_{12})P(q_{34}) +
\frac{1}{3}P(q_{12})\delta(q_{12}-q_{34}). \label{GG2_s2_Parisi}
\ee

\subsection{Zero average polynomials at even $p$} \label{sec54}

Let us now see how to prove these properties in p-spin glasses (or at least the
equality of the second moments of the relevant distributions)
 following Ghirlanda and Guerra argument  \cite{gg}. Denote by $e(\sigma)=H_N(\sigma)/N$ the energy density;
the dependence on $N$ will be left implicit below. This quantity
is self-averaging:
\be\label{ghirla}
\lim_{N \rightarrow
\infty}(\langle e(\sigma)^2 \rangle - \langle e(\sigma) \rangle^2)=0.
\ee
Let us sketch an euristic proof of (\ref{ghirla}):
\be \label{376}
\langle e(\sigma)^2 \rangle - \langle e(\sigma) \rangle^2 =
\mathbb{E} \omega(e(\sigma)^2)-[\mathbb{E}\omega(e(\sigma))]^2=
 \mathbb{E}[\omega(e(\sigma)^2) - \omega^2(e(\sigma))] +
[\mathbb{E}\omega^2(e(\sigma)) - (\mathbb{E}\omega(e(\sigma)))^2].
\ee
The second term is the variance with the disorder of the Boltzmann
average of the energy density and, as $N \rightarrow \infty$, it goes
to zero. The first term is equal to
$-N^{-1}\partial_{\beta}\mathbb{E}\omega(e(\sigma))$ and, since $\mathbb{E}\omega(e(\sigma))$ is
finite, the prefactor $N^{-1}$ forces also this contribution to go
to zero as $N \rightarrow \infty$.
A rigorous proof for the $p=2$ case can be found in \cite{contuccigiardina} and in \cite{talagrand} for a generic even $p$.

The property (\ref{ghirla}) is fundamental because it implies, for
any function $F_s$ of overlaps among $s$ replicas,
\be\label{inequa} \lim_{N\rightarrow \infty}(\langle
e(\sigma^a)F_s\rangle - \langle e(\sigma) \rangle\langle F_s
\rangle) = 0, \ee where by $e(\sigma^a)$ we mean $e(\sigma)$
calculated on replica $a$, taken to be one of the replicas that
appear in $F_s$. Equation (\ref{inequa}) can be obtained easily
from the Schwartz inequality:
\begin{eqnarray}
&& \lim_{N \rightarrow \infty} (\langle e(\sigma^a)F_s\rangle
- \langle e(\sigma)\rangle\langle F_s\rangle)^2 = \\
&& \lim_{N \rightarrow \infty}\langle (e(\sigma^a)-\langle
e(\sigma)\rangle)F_s\rangle^2 \leq \\
&& \lim_{N \rightarrow \infty}\langle (e(\sigma^a) -\langle
e(\sigma)\rangle)^2\rangle\langle F_s^2 \rangle = 0
\end{eqnarray}
The first term in (\ref{inequa}) can be
evaluated again using Gaussian integration by parts:
%
\be\label{laprima}
\langle e(\sigma^a)F_s\rangle = - \sqrt{\frac{p!}{2N^{p-1}}} \sum_{i_1<...<i_p} \mathbb{E}
J_{i_1,...,i_p} \Omega(F_s \sigma_{i_1}^a ... \sigma_{i_p}^a) \ =  -\frac{\beta}{2}\langle F_s(\sum_{1\leq b\leq s} q_{ab}^p - s
q_{a,s+1}^p)\rangle,
\ee
while the second term is simply
\be\label{laseconda} \langle e(\sigma) \rangle \langle F_s \rangle
= -\frac{\beta}{2}(1 - \langle q_{12}^p \rangle)\langle F_s
\rangle. \ee
Combining equations (\ref{laprima}) and (\ref{laseconda}) we
obtain the first type of GG relation
\be\label{identita1} \lim_{N \rightarrow \infty} \langle F_s
\Big(\sum_{1\leq b \leq s}q_{ab}^p - s q^p_{a,s+1} - (1- \langle
q_{12}^p \rangle)\Big)\rangle = 0. \ee
Since $F_s$ is a generic function, this result implies~\cite{gg}
that, conditionally on all the overlaps $q_{cd}$ with $1\leq
c<d\leq s$,
\be
\langle q_{a,s+1}^p \rangle = \frac{1}{s} \langle q_{12}^p \rangle
+ \frac{1}{s}\sum_{1\leq b\leq s, b\neq a}q_{ab}^p.
\label{2nd_moment_conditional} \ee This is consistent with our
description above of the physical content of the GG relations; the
particular example $s=2$, $a=1$ corresponds to the second moment of
(\ref{GG1_s2_Parisi}).

In the same way  it is possible to derive a constraint for
averages involving $s+2$ replicas by using \be
\mathbb{E}\Omega(e(\sigma))\Omega(F_s)
-\mathbb{E}\Omega(e(\sigma))\mathbb{E}\Omega(F_s)=0,
\label{2nd_selfaveraging} \ee which is based on the vanishing of
the second term of eq.~(\ref{376}). One obtains the second type of
GG identity,
\be\label{simil2} \langle F_s \Big( \sum_{1 \leq b \leq
s}q_{b,s+1}^p + \langle q_{12}^p \rangle -
(s+1)q_{s+1,s+2}^p\Big)\rangle=0. \ee
Again, invoking the arbitrariness of $F_s$, this tells us that
conditional on the overlaps among the first $s$ replicas
\begin{eqnarray}
\langle q^p_{s+1,s+2} \rangle &=& \frac{1}{s+1}\sum_{1 \leq b \leq
s}\langle q_{b,s+1}^p\rangle + \frac{1}{s+1} \langle q_{12}^p
\rangle \nonumber
\\
&=& \frac{2}{s+1}\langle q_{12}^p \rangle + \frac{2}{s(s+1)}
\sum_{1\leq a<b\leq s}q_{ab}^p,
\end{eqnarray}
where the second equation follows by inserting
(\ref{2nd_moment_conditional}). The specific case $s=2$
corresponds to the second moment of (\ref{GG2_s2_Parisi}) as expected.

Finally, subtracting (\ref{2nd_selfaveraging}) from
(\ref{inequa}), which is equivalent to exploiting the vanishing of
the first term in eq.~(\ref{376}), leads to the self-averaging
relation
\be
\mathbb{E}[\Omega(e(\sigma^a)F_s)-\Omega(e(\sigma^a))\Omega(F_s)]=0,
\ee from which it is possible to obtain, again for some fixed
$1\leq a \leq s$, \be \langle F_s \Big( \sum_{1 \leq b \leq s, b
\neq a}q_{ab}^p - s q_{a,s+1}^p - \sum_{1\leq b\leq s} q_{b,s+1}^p
+ (s+1)q_{s+1,s+2}^2 \Big) \rangle. \ee
Summing over $a$ and dividing by two, this last
relation becomes
\be\label{simil3} \langle F_s \Big( \sum_{1 \leq a < b \leq
s}q_{ab}^p -s \sum_{1 \leq a \leq s}q_{a,s+1}^p +
\frac{s(s+1)}{2}q_{s+1,s+2}^p \Big)\rangle = 0, \ee which is the
general form of the AC relations.
It is interesting to note that the l.h.s.\ of
eq.~(\ref{simil3}) equals $2N\beta\partial_{\beta}\langle
F_s\rangle$, as one verifies by direct calculation: As the
$\beta$-derivative must be $O(1)$ we can then directly argue that
(\ref{simil3}) vanishes for large $N$, and does so generically as
$1/N$.

Moving on to concrete examples, the most famous GG relations are
those obtained from
$F_s=q_{12}^p$, where the exponent $p$ makes us focus on the energy term of the p-spin model. They are typically written in the form
\begin{eqnarray}
\langle q_{12}^p q_{13}^p \rangle = \frac{1}{2} \langle
q_{12}^{2p}\rangle +
\frac{1}{2}\langle q_{12}^p \rangle^2 \\
\langle q_{12}^p q_{34}^p \rangle = \frac{1}{3} \langle
q_{12}^{2p}\rangle + \frac{2}{3}\langle q_{12}^p \rangle^2.
\end{eqnarray}
Eliminating $\langle q_{12}^p\rangle^2$, we get, as expected,
the AC relation for $F_s=q_{12}^p$ \be\label{ACrelation1} \langle
q_{12}^{2p} \rangle - 4 \langle q_{12}^p q_{13}^p \rangle + 3 \langle
q_{12}^p q_{34}^p
 \rangle = 0.\ee

\bigskip

\subsubsection{Overlap constraint generators}

We now show that within our smooth cavity field framework these
relations can be obtained very simply from the stochastic
stability of filled monomials (Theorem~\ref{richard}).
Specifically, we claim that the AC identities follow from the
$t$-independence that obtains for averages of such monomials when
$N\to\infty$, and specifically from the vanishing of the
$t$-derivative at $t=\beta^2$: if $F_s$ is a filled monomial, then
\be \lim_{N\rightarrow\infty}\partial_t \langle F_s
\rangle|_{t=\beta^2}=0.\label{generator} \ee This property, for
generic $t$, has already been used in our smooth cavity expression, where we did not evaluate the streaming of
filled graphs like $q^p_{12}$ because they are independent of $t$.
\newline
To see that we can also generate constraints for the overlaps, we
combine $t$-independence with the fact that for $t=\beta^2$, by
Theorem~\ref{benson}, the perturbed Boltzmann state effectively
reverts to the unperturbed state of an enlarged system.
Explicitly, we have by evaluating the $t$-derivative in
(\ref{generator}) using the streaming equation
(Theorem~\ref{streaming_theorem}):
\be \lim_{N \rightarrow \infty}\langle F_s \Big( \sum_{1 \leq a <
b \leq s}q^{p-1}_{ab} - s \sum_{1 \leq a \leq s}q^{p-1}_{a,s+1} +
\frac{s(s+1)}{2}q^{p-1}_{s+1,s+2}\Big)\rangle_{t=\beta^2}=0. \ee
Now, given that $F_s$ is filled, all the terms here are fillable.
Because we are evaluating at $t=\beta^2$, then from
Theorem~\ref{benson} their averages reduce to unperturbed averages
of the corresponding filled expressions. Filling in this case just
means squaring all the overlaps inside the brackets, and so we get
directly
\be \lim_{N \rightarrow \infty}\langle F_s \Big( \sum_{1 \leq a <
b \leq s}q^p_{ab} - s \sum_{1 \leq a \leq s}q^p_{a,s+1} +
\frac{s(s+1)}{2}q^p_{s+1,s+2}\Big)\rangle=0. \ee
This is nothing but the general AC relation~(\ref{simil3}), as
claimed.
\newline
From the streaming of the simplest filled monomial (i.e.\ $\langle
q_{12}^p \rangle$) we find the first AC relation \bc\be
\lim_{N\rightarrow\infty} \partial_t \langle q^p_{12}
\rangle_{t=\beta^2} = \lim_{N\rightarrow\infty} \langle
q_{12}^{2p}-4 q^{p}_{12}q_{23}^p+
3q_{12}^p q_{34}^p \rangle=0, \ee\ec
which denotes overall a perfect agreement among results from our approach and previous knowledge on p-spin models.

\section{Conclusions}

In recent years spin-glasses have attracted a growing interest raised as, day after day,
these systems are becoming the bricks for building models to describe behavior of complex systems,
ranging from biology  to economics.
As a consequence, there is a need for stronger and stronger analytical methods possibly related with the numerical and experimental findings.
This paper wes written with the intention of offering a detailed analysis of a well-known model, namely the p-spin-glass, through two recent methods: the Hamilton-Jacobi technique and the smooth cavity expression. We first provide a picture of the behavior of the p-spin-glass, from a perspective which is intermediate between that of the pure theoretical physicist and that of the rigorous mathematician, hoping to help in bridging the gap between these two approaches. Then, we explain the methods and use them with abundance of details, so to allow the reader to learn them. Indeed, our focus is more on techniques and on their versatility rather than on results themselves, which are mostly already known \cite{gardner,mezardgross,talagrand}.

Summarizing, after a streamlined introduction to the basic properties of the model (expression for the internal energy and convergence of the infinite volume limit), within the Hamilton-Jacobi technique, we obtained analytical expressions for both the RS and the 1-RSB free energies and we gain a clear mathematical control of the underlying physical assumptions. Within the smooth-cavity method, we showed how to build the expression of the free energy through overlap correlation functions and we analyzed the polynomial identities, that always develop in frustrated systems, derived as consequences of the stability of the measure $\lim_{N\to \infty}\sum_{\sigma}$ with respect  to small (negligible) random stochastic perturbations.

It is interesting to note that in the mean field techniques we developed, there is a certain degree of complementary between the two approaches as within the former we use a trial overlap coupled with a single particle free spin, while in the latter we use $p -1$ spins for the cavity: both methods essentially work by reducing the problem to a single-body one, the Hamilton-Jacobi in a direct way, the smooth-cavity in a complementary way.

Further investigations should be addressed to the study of the diluted frustrated p-spin model and its relation with K-satisfiability problems and $P/NP$ completeness.

\subsection*{Acknowledgements}
This research belongs to the strategy founded by FIRB project $RBFR08EKEV$ and by Sapienza University.
AB is grateful to Francesco Guerra and Peter Sollich for useful discussions on stochastic stability and random energy models.

\appendix
\section{Derivatives of the generalized partition function respect to the interpolating parameters}

We show here the derivation of expressions (\ref{eq:dtalfa}, \ref{eq:dxalfa}).
Let's start with the first one:
\be
\partial_t \tilde{\alpha}_N = \frac{1}{N} \easp_0 Z_0^{-1} \partial_t Z_0
\ee
It's easy to see that
\be
\label{eq:deriter}
Z_a^{-1} \partial_t Z_a = \easp_{a+1} f_{a+1} Z_{a+1}^{-1} \partial_t Z_{a+1}
\ee
so that
\be
\label{eq:z0zk}
\easp_0 Z_0^{-1} \partial_t Z_0 = \easp_0 \easp_{1}...\easp_K f_1...f_K Z_K^{-1} \partial_t Z_K \equiv \easp  f_1...f_K Z_K^{-1} \partial_t Z_K
\ee
and, remembering that $Z_K \equiv \tilde{Z}_N$,
\be
\easp_0 Z_0^{-1} \partial_t Z_0 =
\frac{1}{2 \sqrt{t}} \sqrt{ \frac{t p!}{2 N} } \sum_{1\leq i_1<...<i_p\leq N} \easp(f_1...f_K J_{i_1...i_p} \tilde{\omega}(\sigma_{i_1}...\sigma_{i_p})).
\ee
Integrating by parts the $J_{i_1...i_p}$ inside the expectation becomes a derivative:
\be
\frac{1}{2 \sqrt{t}} \sqrt{ \frac{t p!}{2 N} } \sum_{1\leq i_1<...<i_p\leq N}
\left[ \sum_{a=2}^K \easp(f_1...\partial_{J_{i_1...i_p}}f_a...f_K \tilde{\omega}(\sigma_{i_1}...\sigma_{i_p})) +
\easp(f_1... f_K \partial_{J_{i_1...i_p}} \tilde{\omega}(\sigma_{i_1}...\sigma_{i_p})) \right].
\ee
We now proceed by computing separately the two addends in the square brackets.
for the second term we easily find
\be
\partial_{J_{i_1...i_p}} \tilde{\omega}(\sigma_{i_1}...\sigma_{i_p})) = \sqrt{ \frac{tp!}{2N^{p-1}} } (1- \tilde{\omega}^2(\sigma_{i_1}...\sigma_{i_p}) )
\ee
while for the first term we have
\be
\partial{J_{i_1...i_p} } f_a = m_a f_a Z_a^{-1} \partial{J_{i_1...i_p} }  Z_a - m_a f_a E_a (f_a Z_a^{-1} \partial{J_{i_1...i_p} }  Z_a).
\ee
Now, using the analogous of (\ref{eq:deriter}), one has
\bea
Z_a^{-1} \partial{J_{i_1...i_p} }  Z_a & = & \easp_{a+1}...\easp_K (f_{a+1}...f_K Z_K^{-1} \partial{J_{i_1...i_p} }  Z_K) \\
&&  = \sqrt{ \frac{tp!}{2N^{p-1}} } \omega_a (\sigma_{i_1}...\sigma_{i_p})
\eea
and then for $a\geq2$ (remind that $f_1=1$ so its derivative is zero)
\be
\partial{J_{i_1...i_p} }  f_a  = \sqrt{ \frac{tp!}{2N^{p-1}} } m_a f_a (\omega_a (\sigma_{i_1}...\sigma_{i_p}) - \omega_{a-1} (\sigma_{i_1}...\sigma_{i_p}) ).
\ee
Putting together the terms computed we find
\be
\begin{split}
\easp_0 Z_0^{-1} \partial_t Z_0 = &
\frac{1}{4} \frac{p!}{N^{p-1}} \sum_{1\leq i_1<...<i_p\leq N}
\bigg[ \sum_{a=2}^K \easp_0...\easp_a (f_1...f_a \omega_a (\sigma_{i_1}...\sigma_{i_p}) f_{a+1}...f_K \tilde{\omega}(\sigma_{i_1}...\sigma_{i_p}) \\
& - \sum_{a=2}^K \easp_0...\easp_a (f_1...f_a \omega_a (\sigma_{i_1}...\sigma_{i_p}) f_{a+1}...f_K \tilde{\omega}(\sigma_{i_1}...\sigma_{i_p} ) \\
& +1- \easp_0...\easp_K (f_1...f_K \tilde{\omega}^2(\sigma_{i_1}...\sigma_{i_p})) \bigg]
\end{split}
\ee
and noting that in the thermodynamic limit $p! \sum_{i_1<...<i_p} \sim \sum_{i_1,...,i_p}$ we have
\be
\partial_t \tilde{\alpha}_N = \frac{1}{4} \bigg[
\sum_{a=1}^K m_a ( \langle q_{\sigma \sigma'}^p \rangle_a -\langle q_{\sigma \sigma'}^p \rangle_{a-1}) + 1- \langle q_{\sigma \sigma'}^p \rangle_K  \bigg]
\ee
from which the derivation of (\ref{eq:dtalfa}) is straightforward.

Let's now compute the derivatives of the free energy respect to the "spatial" parameters:
\be
\begin{split}
\partial_a \tilde{\alpha}_N  & =
\frac{1}{N} \easp_0 Z_0^{-1} \partial_a Z_0 \\
& = \frac{1}{N} \easp (f_1...f_K Z_K^{-1} \partial_a Z_K) \\
& = \frac{1}{N} \frac{1}{ 2 \sqrt{x_a} } q^{ \frac{p-2}{4} } \easp (f_1...f_K \sum_{i=1}^N J_i^a \tilde{\omega}(\sigma_i))
\end{split}
\ee
where we used the analogous of (\ref{eq:z0zk}).
Integrating by parts this becomes
\be
\label{eq:daalfa1}
\begin{split}
\partial_a \tilde{\alpha}_N  = &
\frac{1}{N} \easp_0 \sum_i \bigg[ \easp_1...\easp_K(\sum_{b=2}^K f_1... \partial_{J_i^a} f_b...f_K \tilde{\omega}(\sigma_i)) \\
& + \easp_1...\easp_K( f_1...f_K \partial_{J_i^a} \tilde{\omega}(\sigma_i)) \bigg].
\end{split}
\ee
The derivatives of the state and of $f_b$ respect to the random fields are respectively given by
\be
\begin{split}
\partial_{J_i^a} \tilde{\omega}(\sigma_i) & = \sqrt{x_a} q_a^{ \frac{p-2}{4} } (1 - \tilde{\omega}^2(\sigma_i) ) \\
\partial_{J_i^a} f_b & =
\begin{cases}
m_b f_b (Z_b^{-1} \partial_{J_i^a} Z_b - E_b f_b Z_b^{-1} \partial_{J_i^a} Z_b) & \mbox{if } a<b \\
m_b f_b Z_b^{-1} \partial_{J_i^a} Z_b & \mbox{if } a=b \\
0 & \mbox{if } a>b.
\end{cases}
\end{split}
\ee
Using again the iterative derivation formula we have
\be
\begin{split}
Z_b^{-1} \partial_{J_i^a} Z_b & = E_{b+1}(f_{b+1} Z_{b+1}^{-1} \partial_{J_i^a} Z_{b+1}) \\
& = E_{b+1}...E_K (f_{b+1}...f_K Z_{K}^{-1} \partial_{J_i^a} Z_{K}) \\
& = \sqrt{x_a} q_a^{ \frac{p-2}{4} }E_{b+1}...E_K (f_{b+1}...f_K \tilde{\omega}(\sigma_i) ) \\
& =  \sqrt{x_a} q_a^{ \frac{p-2}{4} } \omega_b(\sigma_i)
\end{split}
\ee
so that
\be
\partial_{J_i^a} f_b =
\begin{cases}
\sqrt{x_a} q_a^{ \frac{p-2}{4} } m_b f_b ( \omega_b(\sigma_i) - \omega_{b-1} (\sigma_i) ) & \mbox{if } a<b \\
\sqrt{x_a} q_a^{ \frac{p-2}{4} } m_b f_b \omega_b(\sigma_i)  & \mbox{if } a=b \\
0  & \mbox{if } a>b.
\end{cases}
\ee
Substituting these in the (\ref{eq:daalfa1}), the first term inside the square brackets becomes
\be
\begin{split}
\easp_1...\easp_K (\sum_{b=2}^K f_1... \partial_{J_i^a} f_b...f_K \tilde{\omega}(\sigma_i)) =
& \sqrt{x_a} q_a^{ \frac{p-2}{4} } \big[ m_a \easp_1... \easp_a (f_1...f_a \omega_a^2(\sigma_i)) \\
& + \sum_{b=a+1}^K m_b \easp_1... \easp_{b} (f_1...f_{b} \omega_b^2(\sigma_i)) \\
& - \sum_{b=a+1}^K m_b \easp_1... \easp_{b-1} (f_1...f_{b-1} \omega_b^2(\sigma_i)) \big]
\end{split}
\ee
and we find
\be
\partial_a \tilde{\alpha}_N = \frac{1}{2} q_a^{\frac{p-2}{2}}
\big[m_a \langle q_{\sigma \sigma'} \rangle_a
+ \sum_{b>a} m_b (\langle q_{\sigma \sigma'} \rangle_{b} -\langle q_{\sigma \sigma'} \rangle_{b-1}) + 1 - \langle q_{\sigma \sigma'} \rangle_K \big]
\ee
from which, after some manipulations, one can easily obtain the (\ref{eq:dxalfa}).

\addcontentsline{toc}{section}{References}

\end{document}